\newtheorem{Lem}{Lemma} 
\newtheorem{Thm}{Theorem}
\newcommand{\st}{%
\mathrel{\ooalign{$\ni$\cr\kern-1pt$-$\kern-6.5pt$-$}}}
\begin{document}

\title{Compressive Sensing of Large-Scale Images:\\An Assumption-Free Approach}

\author{Wei-Jie~Liang,~
        Gang-Xuan~Lin,
        and~Chun-Shien~Lu
\thanks{W.-J. Liang and G.-X. Lin are with Department of Mathematics, National Cheng-Kung University, Tainan, Taiwan, ROC}
\thanks{C.-S. Lu is with Institute of Information Science, Academia Sinica, Taipei, Taiwan, ROC}
}

\maketitle

\begin{abstract}
Cost-efficient compressive sensing of big media data with fast reconstructed high-quality results is very challenging.
In this paper, we propose a new large-scale image compressive sensing method, composed of operator-based strategy in the context of fixed point continuation method and weighted LASSO with tree structure sparsity pattern. The main characteristic of our method is free from any assumptions and restrictions. The feasibility of our method is verified via simulations and comparisons with state-of-the-art algorithms.%
\end{abstract}

\begin{IEEEkeywords}
Compressed sensing, Convex optimization, Large-scale images, Sparsity.
\end{IEEEkeywords}

\IEEEpeerreviewmaketitle

\section{Introduction}
\IEEEPARstart
{C}{ompressive} sensing (CS) \cite{Baraniuk07, CRT06, Donoho06} of sparse signals in achieving simultaneous data acquisition and compression has been extensively studied in the literature.
In the context of CS, we usually let $u$ denote a $k$-sparse signal of length $n$ to be sensed, let $\Phi$ of dimensionality $m \times n$ represent a sampling matrix, and let $y$ be the measurement of length $m$, where $k< m < n$ and $0<\frac{m}{n}<1$ is defined as the measurement rate.
At the encoder, random projection, defined as:
\begin{equation}
y=\Phi u,
\label{Eq: Random Projection-1D}
\end{equation}
is conducted on the original signal $u$ via $\Phi$ to obtain the measurement vector $y$.
At the decoder, $u$ can be recovered based on its sparsity by means of convex optimization or greedy algorithms.

Compressive sensing has been widely studied for $1$D signals and $2$D images with reasonable sizes.
However, efficient compressive sensing of big media data without making any assumptions has not been found in the literature.
We can foresee that when signal length $n$ is large, almost all existing CS algorithms will run out of memory and/or the computations will be overloaded.
To address the need of compressive sensing of big media data in both efficient sensing and recovery aspects, we study a new scheme for big media signals.
Please note that the storage consumption for a random matrix is unacceptable because it needs $2$G memory to store a random matrix with (small) size $n=\left(128\times128\right)^2$.


\subsection{Related Work}
In \cite{Duarte12}, Duarte and Baraniuk introduce Kronecker product to model multidimensional compressive sensing of signals and propose a Kronecker compressive sensing (KCS) method.
They prove that the Kronecker sensing matrix and Kronecker dictionary possess mutual incoherence property 
(MIP) and restricted isometric property (RIP).
Nevertheless, the practical use of KCS is greatly prohibited because the vectorization of multidimensional signals and the use of joint sensing involve a (very) large Kronecker product-based sensing matrix.

In \cite{Sidiropoulos12}, a multiway compressive sensing (MWCS) method for sparse and low-rank tensors is proposed. 
Although MWCS achieves more efficient reconstruction, 
its performance relies heavily on tensor rank estimation, which is NP-hard.
A generalized tensor compressive sensing (GTCS) method for higher-order tensors has been proposed in \cite{Li13}.
GTCS is demonstrated to be comparable to KCS in recovery accuracy and be greatly faster than KCS in recovery speed.

While the previous studies consider some tensor operations like Kronecker product, CP decomposition, and Tucker model within the framework of compressive sensing, the sparsity pattern inherent in the big media data/tensor (like $2$D image and $3$D video) has not been fully explored.
Recently, Caiafa and Cichocki \cite{Caiafa13} exploit Kronecker product and block sparsity to develop a so-called N-BOMP (N-way block OMP).
However, we find, as also indicated in subsection 7.2.1 of \cite{Caiafa13}, that for a $2$D image it is pre-processed in advance to possess perfect block sparsity pattern in that the important/significant coefficients in some transform domain fall within the specified block sparsity pattern while other insignificant coefficients are entirely removed.
Under the situation, N-BOMP is able to obtain reconstruction quality far better than the existing tensor CS algorithms.

Recently, Caiafa and Cichocki \cite{Caiafa14} present a fast tensor compressive sensing method, which no longer assumes certain sparsity pattern and does not involve iterations, making it suitable for large-scale problems.
However, it assumes that the signal to be sensed and recovered has low multilinear-rank, leading to redundant sensing, which means that under the same measurement rate the reconstructed quality is (remarkably) lower than other CS algorithms.

In \cite{Lu14}, we previously propose the use of tree-structure sparsity pattern (TSSP) in tensor compressive sensing.
TSSP can help to fast find significant wavelet coefficients.
Its weakness is that there does not exist a fast recovery algorithm that can exploit TSSP.
In this paper, we conquer this problem.

Hale {\em et al.} \cite{W.Yin08} derive the optimality conditions of convex optimization problem
as a fixed-point equation.
Later, Wen {\em et al.} \cite{W.Yin10}\cite{W.Yin12} provide a fast algorithm, called FPC\_AS, to solve the convex optimization problem.
Specifically, their sensing matrix for convex optimization problem is chosed as a partial DCT matrix, whereas
the fast algorithm is based on active-set strategy, which can solve the fixed-point equation in large-scale problems.


\subsection{Motivation and Our Contributions}
As indicated in Eq. (\ref{Eq: Random Projection-1D}), when the signal length becomes large enough, the storage and computation of using the sensing matrix become an obstacle.
To deal with sensing and recovery of large-sized images, we do not follow the convention of dividing a large image into several small blocks \cite{Gan07, Mun09, Mun11, YPL13}, wherein each small block can be separately processed.
This will incur either blocky effects in recovery or required calibration in block-based sensing.

Based on the above concerns, we explore the sensing strategy \cite{Do12} and recovery strategy \cite{W.Yin08, W.Yin10, W.Yin12, Milzarek14} that can be operated in operators to speed up computation and save storage without needing any assumptions/restrictions.
In this paper, the recovery strategy is based on exploiting fixed-point equation \cite{W.Yin08} to solve the convex optimization problem instead of greedy algorithms that rely on matrix computation, which is prohibited in the context of big data compressive sensing.
Specifically, the recovery method is based on the FPC\_AS algorithm \cite{W.Yin10, W.Yin12} that solves the fixed-point equation.
To recover the large-scale data,
the authors in \cite{W.Yin08, W.Yin10, W.Yin12} propose an active-set algorithm and Milzarek {\em et al.} \cite{Milzarek14} propose a globalized semismooth Newton method to solve large-scale $l_1$-norm optimization problems,
where partial DCT matrix is adopted as the sensing matrix for fast sensing.
However, the signal to be sensed in \cite{Milzarek14} is assumed to be sparse in the time domain (that is the sparsifying basis is the identity matrix).

In this paper, we propose to use a kind of random Gaussian-like matrix, called Structurally Random Matrix (SRM) \cite{Do12}, as the sensing matrix, and wavelet as the sparsifying basis to deal with large-scale images.
Our choice can satisfy RIP and MIP in CS.
Basically, our method can be viewed as an extension of \cite{Milzarek14} to compressive sensing of large-scale signals that are not sparse themselves but such extension is not trivial at all.


In addition to the above, for sparse recovery of big images, the sparsity pattern plays an important role.
With an eye on the natural characteristic of tree-structure relationship among wavelet coefficients that are popularly used to represent media data, we propose to explore tree-structure sparsity pattern (TSSP) in big data compressive sensing.
When TSSP is considered (with the way different from \cite{Lu14}), our strategy is to assign smaller weights to the wavelet coefficients at the lower frequencies and larger weights to those at the higher frequencies under the framework of convex optimization.
Specifically, we explore a weighted-LASSO algorithm for sparse recovery of big images from sensed measurements.


\section{BIG IMAGE COMPRESSIVE SENSING}
\label{sec:pagestyle}

In this section, we describe the proposed method, wherein fixed point strategy is adopted to solve the Lasso problem and the step size is controlled by quasi-Armijo rule.

\subsection{System Model}
Let $U \in \mathbb{R}^{N \times N}$ be the $2$-dimensional image.
It can be sparsely represented via certain dictionaries $\Psi_1$ and $\Psi_2$ as:
\begin{equation}
	U = \Psi_1 X \Psi_2^T,
\label{sparse}
\end{equation}
where $X$ is sparse with respect to $\Psi_1$ and $\Psi_2$.
Here, we reshape $2$D signal to $1$D vector.
Based on Eq. (\ref{Eq: Random Projection-1D}), we have
\begin{equation}
	y = \Phi u = \Phi \Psi x = Ax,
\label{CS}
\end{equation}
where $A = \Phi \Psi\in\mathbb{R}^{m\times n}$, $n=N^2$, sparsifying basis $\Psi = \Psi_2 \otimes \Psi_1\in\mathbb{R}^{n\times n}$, $u = vec(U)$, and $x = vec(X)\in\mathbb{R}^n$.

The main problem is to reconstruct the vector $x$ from the measurement $y$ in Eq. (\ref{CS}).
A well-established approach for the reconstruction
is an optimization method which we call the
LASSO (or penalized least-squares) problem:
\begin{equation}
\displaystyle\min_{x\in \mathbb{R}^n}\lambda\left\|x\right\|_1 + \frac{1}{2}\left\| y-Ax\right\|_2^2, 
\label{Lasso}
\end{equation}
where $\lambda>0$ specifies the penalty of sparse level of $x$.
To ease discussion later, we set $\mathcal{F}(x) = \lambda\left\|x\right\|_1 + \frac{1}{2}\left\| y-Ax\right\|_2^2$.

\subsection{Sensing Matrix Design}\label{Sensing Matrix Design}
The common use of random Gaussian matrix as the sensing matrix leads to problems of storage and computation cost.
Although storage consumption can be overcome by using a seed to generate random Gaussian, it still encounters high computational cost.
Nguyen {\em et al.} in \cite{Do12} propose a framework, called Structurally Random Matrix (SRM), with:
\begin{equation}
	\Phi = DFR,
\label{DFR}
\end{equation}
where $D \in\mathbb{R}^{m \times n}$ is a sampling matrix, $F \in\mathbb{R}^{n \times n}$ is an orthonormal matrix, and $R \in\mathbb{R}^{n \times n}$ is a uniform random permutation matrix.
Since the distributions between a random Gaussian matrix and SRM's $\Phi$ are verified to be similar, we choose Eq. (\ref{DFR}) as the sensing matrix for our use.

In this paper, we set $F$ to the Discrete Cosine Transform (DCT) due to its fast computation and cost-effectiveness.

\subsection{Fixed Point Method with Quasi-Armijo Rule}
Due to the convexity of the function $\mathcal{F}(x)$ in Eq. (\ref{Lasso}), the global minimum solution
is exactly the critical point of $\mathcal{F}(x)$.
In \cite{Milzarek14}, the authors derive that the critical point of $\mathcal{F}(x)$ exactly belongs to the solution set of fixed point equation as
\begin{eqnarray}\label{J}
\mathcal{J}=\left\{x\bigg|x = S_{\tau\lambda}(G_{\tau}(x))\right\},
\end{eqnarray}
where
$\tau>0$ is arbitrarily fixed,
\begin{eqnarray}
\label{fixed point}
\left\{\begin{array}{c}
G_{\tau}(x) = x-\tau A^T(Ax-y),\\
S_{\tau\lambda}(x) = x-\mathcal{P}_{\left[-\tau\lambda,\tau\lambda\right]}(x),
\end{array}\right.
\end{eqnarray}
and
$\mathcal{P}_{\left[-c,c\right]}(t)=\min\left\{\max\left\{-c,t\right\},c\right\}$ is the projection
onto the interval $\left[-c,c\right]$.
Then the fixed point iteration is defined as:
\begin{equation}
	x^{k+1} = S_{\tau\lambda}(G_{\tau}(x^k)) \quad \mbox{with} \quad \tau >0.
\label{fixed point iter}
\end{equation}
Let $x^k$ be the current iterate and let $d^k=x^{k+1}-x^k$ be a direction that is generated by Eq. (\ref{fixed point iter}).
Then we can calculate $x^{k+1}$ by $x^k+\sigma_kd^k$, where $\sigma_k$ is the step size controlled by a quasi-Armijo rule.
The algorithm is depicted in Algorithm \ref{alg1}, where we extend it to deal with big images that they themselves are not sparse.

\begin{algorithm}[!hbpt]
  \caption{ \cite{Milzarek14}
Fixed point method with quasi-Armijo rule.}
  \label{alg1}
  \begin{algorithmic}[1]
\raggedright
    \Require
      The initial iterative point, $x^0 = 0 \in \mathbb{R}^n$;
	  The shrinkage parameter, $\tau>0$;
	  The quasi Armijo's step size parameter, $\beta\in(0,1)$;
      A constant, $\gamma\in(0,1)$;
	  The weighted parameter of LASSO, $\lambda>0$;
	  An initial iterative step, $k=0$;

    \Ensure
      The $k^{\footnotesize \mbox{th}}$ iterative point, $x^k$;

    \State Calculate the direction: $d^k = S_{\tau\lambda}(G_{\tau}(x^k))-x^k$;
    \label{alg1:direction}

	\While{$d^k\neq0$}

	\State Calculate $\triangle_k=\left(d^k\right)^{T}\left(A^T\left(Ax-y\right)\right)
+\lambda\left(\left\|S_{\tau\lambda}\left(G_{\tau}\left(x^k\right)\right)\right\|_1-\left\|x^k\right\|_1\right)$;%
	\label{alg1:delta}

    \State Choose a maximal quasi-Armijo step size
$\sigma_k\in\left\{1, \beta, \beta^2, \ldots\right\}$
such that
$\mathcal{F}(x^k+\sigma_kd^k)-\mathcal{F}(x^k)\leq\sigma_k\gamma\triangle_k$;
    \label{alg1:quasi Armijo}

    \State $x^{k+1} = x^k + \sigma_kd^k$;
    \label{alg1:iterative}

	\EndWhile

	\State \textbf{return} $x^k$;
	\label{alg1:return}
  \end{algorithmic}
\end{algorithm}


\subsection{Tree Structure Sparsity in Convex Optimization}
\label{Tree Structure Sparsity in Convex Optimization}

For improving the quality of reconstruction, we refer to an iterative reweighted $l_1$-norm minimization (IRWL1), which is proposed in \cite{CWB08}.
For a given diagonal weighted matrix $W\in\mathbb{R}^{n\times n}$, the convex optimization problem in
Eq. (\ref{Lasso}) can be relaxed with $\hat{x}=Wx$ and $\hat{A} = \Phi \Psi W^{-1}$ as:
\begin{equation}
	\displaystyle\min_{\hat{x} \in \mathbb{R}^n}
	\lambda \left\|\hat{x}\right\|_1 + \frac{1}{2}\left\|\hat{A} \hat{x} - y\right\|_2^2.
\label{Lasso_weighted}
\end{equation}
The main difference between our model in Eq. (\ref{Lasso_weighted}) and IRWL1
is that the weighted matrix $W$ is determined by tree structure sparsity pattern (TSSP).

More specifically, TSSP is yielded by separating the wavelet coefficient for 2D image into support and not-support sets.
We adopt the wavelet, as provided in the source code of \cite{Caiafa13}, as the dictionary $\Psi$ and apply it to an image with $S$ levels to obtain the subbands \footnotesize$\left\{LL_S, LH_S, HL_S, HH_S, LH_{S-1},\dots,HH_1\right\}$\normalsize, where $L$ and $H$ denote low and high frequencies, respectively.
The resultant wavelet coefficients are weighted according to which levels of subbands they are located in.

Algorithm \ref{alg2} describes how to solve Eq. (\ref{Lasso_weighted}).
First, since the coefficients in $LL_S$ are significant, the corresponding indices in $W$ are all reserved and set to $0.1$ (as in initialization part of Algorithm \ref{alg2}), and other diagonal elements are set to $1$.
We solve Eq. (\ref{Lasso_weighted}) (Step \ref{alg2:first iteration} in Algorithm \ref{alg2}) to yield the initial solution.

Second, we check the entries from the subbands $LH_S$, $HL_S$ and $HH_S$ that could be the roots of evolving trees
(Steps \ref{alg2:truncated}, \ref{alg2:residue},
and \ref{alg2:correlation} in Algorithm \ref{alg2}).
They will be put in the queue $Q$ if they are supports (corresponding coefficients are large enough (Step \ref{alg2:0.2} in Algorithm \ref{alg2})).
For each element in $Q$, if it is checked to be a support, its children will be put in $Q$
(Steps \ref{alg2:0.2} and \ref{alg2:sub1} in Algorithm \ref{alg2}).
This process is repeated until $Q$ is empty to complete the generation of TSSP.

To construct $W$,
its diagonal entries are decided by TSSP.
We expect that the wavelet coefficients, solved by
Eq. (\ref{Lasso_weighted}), are located on the tree decided by TSSP.
Since the decision variable with small weight will derive the
large wavelet coefficient, together with the fact that
the energy is decreasing from level $S$ to level 1,
the weights are empirically set to:
\begin{equation}
\label{W}
diag(W)_i=0.1(S-s+1)
\mbox{\ \  if \ \ }
i\in I_s,
\end{equation}
where $I_s$ is the subset of $LH_s\cup HL_s\cup HH_s$ and is decided by TSSP.


Finally, we solve Eq. (\ref{Lasso_weighted}) with weighted matrix $W$ in Eq. (\ref{W}) to obtain the final solution (Step \ref{alg2:third iteration} in Algorithm \ref{alg2}).

\begin{algorithm}[!hbpt]
  \caption{ IRWL1 with TSSP.}
  \label{alg2}
  \begin{algorithmic}[1]
\raggedright
    \Require
	  The initial weighted iterative point, $\hat{x}=0\in\mathbb{R}^n$;
	  The initial truncated iterative point, $\tilde{x}=0\in\mathbb{R}^n$;
	  The weighted matrix, $W=I_{n\times n}$;
	  The index set of $LL_S$, $I_{S+1}$;
	  The empty sets, $I_s$, $1\leq s\leq S$;
	  The percentage, $p\%$;
	  A tolerance, $\epsilon>0$;

    \Ensure
      The final output, $x^*$;

	\State Set the weighted of $LL_S$ as $diag(W)\big|_{I_{S+1}}=0.1$;
	\label{alg2:0.1}

    \State Solve Eq. (\ref{Lasso_weighted}) by Alg. \ref{alg1} with weighted matrix $W$ to obtain the solution $x^*$;
    \label{alg2:first iteration}

    \State Set the truncated variable $\tilde{x}\big|_{I_{S+1}}=x^*\big|_{I_{S+1}}$;
    \label{alg2:truncated}

	\State Calculate residue $r = y - \hat{A}\tilde{x}$;
	\label{alg2:residue}

    \State Calculate correlation $c_i = \left|\hat{A_i}^Tr\right|$, $i=1,2,\ldots,n$;
    \label{alg2:correlation}

    \State Let $I_S$ collects the indices with the first $p\%$ largest correlations in $LH_S\cup HL_S\cup HH_S$.
Set $diag(W)\big|_{I_S} = 0.2$;
    \label{alg2:0.2}

	\For{$s=S-1, S-2, \ldots, 1$}

	\State Construct $I_s$ as the children of $I_{s+1}$, set $diag(W)\big|_{I_s}$ according to Eq. (\ref{W});
	\label{alg2:sub1}

	\State Solve Eq. (\ref{Lasso_weighted}) by Alg. 1
to obtain the solution $x^*$;
	\label{alg2:second iteration}

	\For{$j\in I_s$}

	\If{$\left|x^*_j\right|<\epsilon$}

	\State remove index $j$ and set $diag(W)_j=1$;
	\label{alg2:sub3}

	\EndIf

	\EndFor

	\EndFor

	\State Solve Eq. (\ref{Lasso_weighted}) by Alg. 1 with final
weighted matrix $W$ and output the solution $x^*$;
	\label{alg2:third iteration}

	\State \textbf{return};
	\label{alg1:return}
  \end{algorithmic}
\end{algorithm}

\subsection{Memory Cost and Computational Complexity}
The main computational cost of solving Eq. (\ref{Lasso_weighted}) comes from the computation of matrix-vector multiplications:
$$\hat{A}\hat{x}=\Phi\Psi W^{-1}\hat{x}=DFR\mathcal{W}W^{-1}\hat{x},$$
including $G_{\tau}(\hat{x})=\hat{x}-\tau \hat{A}^T\left(\hat{A}\hat{x}-y\right)$
and $\mathcal{F}\left(\hat{x}^k\right)=\lambda\left\|\hat{x}\right\|_1
+\frac{1}{2}\left\|y-\hat{A}\hat{x}\right\|_2^2$,
where $D$, $F$, and $R$ are defined in Sec. \ref{Sensing Matrix Design},
$\mathcal{W}$ is an inverse wavelet transform, and $W$ is defined in Sec. \ref{Tree Structure Sparsity in Convex Optimization}.

Since the size of $x$ is $n=N^2$, the storage cost for the matrices $F$, $R$, $\mathcal{W}$, and $W$ is $n^2=N^4$, and is $m\times n$ for matrix $D$.
For example, if the image we aim to reconstruct is of size $128\times128$, then the matrices, $F$, $R$, $\mathcal{W}$, $W$, cost around $8$GB memory in total.
However, if we consider that the permutation matrix $R$ and diagonal matrix $W$ can be represented by $n=N^2$ entries,
around $4$GB are enough.
As a result, the limited storage leads to the restriction of image size.

In order to speed up the computation of linear transformation ({\em i.e.}, matrix-vector multiplications here), we resort to linear operator in MATLAB or reformulation, as described below.
In other words, the memories required to store the matrices, mentioned above, can be remarkably reduced accordingly such that our method can be adaptive to large images.
\begin{itemize}
\item Weighted matrix $W\in\mathbb{R}^{n\times n}$:\\
$Wx=w\circ x$, where $w=vec(W)$ and $\circ$ denotes the
Hadamard product. The memory cost of $W$ is defined as $cost_M(W)=n$ due to Hadamard operation.

\item Inverse wavelet matrix $\mathcal{W}\in\mathbb{R}^{n\times n}$:\\
$\mathcal{W}x=\left(\mathbb{W}_2\otimes\mathbb{W}_1\right)x
=vec(\mathbb{W}_1X\mathbb{W}_2^T)$.

We can see that $cost_M(\mathcal{W})=2n$ due to the use of Kronecker product.

\item Random permutation matrix $R\in\mathbb{R}^{n\times n}$:\\
$Rx=\mathcal{R}(x)$, where the operator $\mathcal{R}(\cdot)$ randomly permutes the indices of vector $x$.
Thus, $cost_M(\mathcal{R})=n$.

\item Discrete Cosine Transform (DCT) $F\in\mathbb{R}^{n\times n}$:\\
$F(x)=dct(x)$, where DCT can be calculated by the dct operator in MATLAB, which is speeded up by Fast Fourier Transform.
Thus, we have $cost_M(F)=n$.

\item Partial random permutation $D\in\mathbb{R}^{m\times n}$:\\
$Dx=\mathcal{D}(x)$, where the operator $\mathcal{D}(\cdot)$ randomly chooses $m$ indices from $n$ components in vector $x$. So $cost_M(\mathcal{D})=m$.
\end{itemize}
Therefore, the memory cost of our method is in total $m+6n$.

On the other hand, the computational complexity for calculating $\hat{A}\hat{x}$ by matrix-vector multiplications and by linear operator are compared as follows.
\begin{itemize}
\item Matrix-vector multiplication: Since\\ $\hat{A}\hat{x}=DFR\mathcal{W}W^{-1}\hat{x}
=\Phi\mathcal{W}W^{-1}\hat{x}$, the time complexity for individual matrix multiplication is:
\begin{center}
\hspace*{-15pt}\begin{tabular}{|c|c|c|c|}
\hline
matrix&$\Phi=DFR$&$\mathcal{W}$&$W^{-1}$\\
\hline
complexity&$O(mn)$&$O(n^2)$&$O(n^2)$\\
\hline
\end{tabular}
\end{center}

\item Linear operations or reformulations: The time complexity for individual operation is:
\begin{center}
\hspace*{-5pt}\begin{tabular}{|c|@{\hspace*{+3pt}}c@{\hspace*{+3pt}}|@{\hspace*{+3pt}}c@{\hspace*{+3pt}}|@{\hspace*{+3pt}}c@{\hspace*{+3pt}}|@{\hspace*{+3pt}}c@{\hspace*{+3pt}}|@{\hspace*{+3pt}}c@{\hspace*{+3pt}}|}
\hline
operation&$D$&$F$&$R$&$\mathcal{W}$&$W^{-1}$\\
\hline
complexity
&{\footnotesize$O(m)$}
&{\footnotesize$O(n\log(n))$}
&{\footnotesize$O(n)$}
&{\footnotesize$O(n^{3/2})$}
&{\footnotesize$O(n)$}\\
\hline
\end{tabular}
\end{center}
\end{itemize}

We can see that the computational complexity $O(n^2)$ of matrix-vector multiplications is reduced to $O(n^{3/2})$ of linear operations.



\subsection{Convergence Analysis}
In this section, we study the convergence of the solution sequence $\{x_n\}$, which is generated by Algorithm \ref{alg1}.
Based on \cite{W.Yin08}, we choose $$\tau \in \left(0,2/\hat{\lambda}_{\max}\right),$$ which guarantees that both two functions in Eq. (\ref{fixed point}) are nonexpansive, where $$\hat{\lambda}_{\max}:= \max_{x}\lambda_{\max}{\nabla}^2\left(\frac{1}{2}\|y-Ax\|^2_2\right).$$

\begin{Thm}\label{thm1}
Let $\{x_n\}$ be a sequence generated by
Algorithm \ref{alg1}. Assume that $\mathcal{J}\neq\emptyset$.
Then $\{x_n\}$ converges to a point in $\mathcal{J}$.
\end{Thm}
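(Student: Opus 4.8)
The plan is to read Algorithm~\ref{alg1} as a relaxed (averaged) fixed-point iteration for the nonexpansive operator $T:=S_{\tau\lambda}\circ G_{\tau}$ and to couple a sufficient-decrease estimate on $\mathcal{F}$ with a Fej\'er-monotonicity argument relative to $\mathcal{J}$. First I would record the structural identity $x^{k+1}=x^{k}+\sigma_k d^k=(1-\sigma_k)x^k+\sigma_k T(x^k)$, which holds because $d^k=T(x^k)-x^k$ and $\sigma_k\in(0,1]$; the fixed points of $T$ are exactly the elements of $\mathcal{J}$, and by the choice $\tau\in(0,2/\hat{\lambda}_{\max})$ the map $T$ is a composition of two nonexpansive maps, hence itself nonexpansive and continuous.

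Second, I would show that $d^k$ is a strict descent direction and that $\triangle_k$ measures its effect. Writing $g(x)=\tfrac12\|y-Ax\|_2^2$ so that $G_\tau(x)=x-\tau\nabla g(x)$ and $T(x)=\operatorname{prox}_{\tau\lambda\|\cdot\|_1}(x-\tau\nabla g(x))$, the subgradient optimality condition for this proximal step yields $-\tfrac1\tau d^k-\nabla g(x^k)\in\lambda\,\partial\|T(x^k)\|_1$. Inserting it into the subgradient inequality for $\lambda\|\cdot\|_1$ gives
$$\triangle_k=\langle\nabla g(x^k),d^k\rangle+\lambda\big(\|T(x^k)\|_1-\|x^k\|_1\big)\le-\tfrac1\tau\|d^k\|_2^2,$$
so $\triangle_k<0$ whenever $d^k\neq0$. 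Since $g$ is quadratic, convexity of $\|\cdot\|_1$ furnishes the exact bound $\mathcal{F}(x^k+t d^k)-\mathcal{F}(x^k)\le t\triangle_k+\tfrac{t^2}{2}\|Ad^k\|_2^2$ for $t\in[0,1]$; as $\gamma<1$, this shows the quasi-Armijo test in Step~\ref{alg1:quasi Armijo} holds for all sufficiently small $t$, so the backtracking terminates, and combining $|\triangle_k|\ge\tfrac1\tau\|d^k\|_2^2$ with $\|Ad^k\|_2^2\le\hat{\lambda}_{\max}\|d^k\|_2^2$ produces a uniform lower bound $\sigma_k\ge\sigma_{\min}>0$.

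Third, the Armijo inequality then gives the sufficient decrease $\mathcal{F}(x^{k+1})\le\mathcal{F}(x^k)-c\,\|d^k\|_2^2$ for some $c>0$; since $\mathcal{F}\ge0$ is bounded below, $\{\mathcal{F}(x^k)\}$ converges and $\sum_k\|d^k\|_2^2<\infty$, whence $d^k\to0$. Coercivity of $\mathcal{F}$ (forced by the term $\lambda\|x\|_1\ge\lambda\|x\|_2$) confines $\{x^k\}$ to the bounded sublevel set $\{x:\mathcal{F}(x)\le\mathcal{F}(x^0)\}$, so cluster points exist; continuity of $T$ together with $d^k\to0$ forces every cluster point $x^\ast$ to satisfy $T(x^\ast)=x^\ast$, i.e. $x^\ast\in\mathcal{J}$.

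Finally, to upgrade subsequential convergence to convergence of the whole sequence I would establish Fej\'er monotonicity: for any $x^\ast\in\mathcal{J}$ the averaging identity for $\|(1-\sigma_k)a+\sigma_k b\|_2^2$ and the nonexpansiveness of $T$ give $\|x^{k+1}-x^\ast\|_2^2\le\|x^k-x^\ast\|_2^2-\sigma_k(1-\sigma_k)\|d^k\|_2^2\le\|x^k-x^\ast\|_2^2$, so $\|x^k-x^\ast\|_2$ is nonincreasing and hence convergent. Choosing a subsequence $x^{k_j}\to x^\ast\in\mathcal{J}$ then forces $\lim_k\|x^k-x^\ast\|_2=0$, i.e. $x^k\to x^\ast\in\mathcal{J}$. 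I expect the main obstacle to be precisely this last passage: the backtracking permits $\sigma_k=1$, for which the Fej\'er decrement $\sigma_k(1-\sigma_k)\|d^k\|_2^2$ vanishes, so $d^k\to0$ cannot be read off from the Fej\'er inequality and must instead be extracted from the $\mathcal{F}$-descent estimate, with the two estimates kept separate and only merged at the end. A secondary delicate point is justifying the proximal/subgradient identity that underlies the descent bound on $\triangle_k$.
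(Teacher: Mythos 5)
Your proof is correct, but it takes a genuinely different route from the paper's --- and, in fact, a more rigorous one. The paper's proof has three stages: (i) Fej\'er monotonicity, $\|x^{k+1}-z\|\le\|x^k-z\|$ for $z\in\mathcal{J}$, obtained from the same averaging identity and nonexpansiveness (Lemma \ref{nex}) that you use; (ii) a direct claim that $\{x^k\}$ is Cauchy, derived from the parallelogram law (Lemma \ref{para}) after asserting $\lim_{h,k\to\infty}\|x^h+x^k-2z\|=2c$, where $c=\lim_k\|x^k-z\|$; and (iii) $\liminf_k\|\mathcal{P}^k-x^k\|=0$ via Lemma \ref{liminf}, justified by the assertion that the step sizes $\sigma_k$ are ``mutually independent'' and hence do not converge to $0$, then upgraded to a full limit by showing $\|\mathcal{P}^k-x^k\|$ is nonincreasing, and concluded by continuity of $T=S_{\tau\lambda}\circ G_{\tau}$. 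You keep stage (i) (with the sharper decrement $\sigma_k(1-\sigma_k)\|d^k\|_2^2$) and the final continuity step, but you replace (ii) and (iii) entirely: the prox-optimality estimate $\triangle_k\le-\tau^{-1}\|d^k\|_2^2$, the quadratic upper bound giving a uniform step-size floor $\sigma_k\ge\min\{1,\,2\beta(1-\gamma)/(\tau\hat{\lambda}_{\max})\}>0$, sufficient decrease of $\mathcal{F}$ plus summability giving $d^k\to0$, coercivity for boundedness, and Fej\'er monotonicity to pass from one cluster point in $\mathcal{J}$ to convergence of the whole sequence. What your route buys is rigor at exactly the paper's two weak points. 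First, the paper's stage (ii) needs the two-sided bound $\bigl|\|x^h+x^k-2z\|-2c\bigr|\le\bigl|\|x^h-z\|-c\bigr|+\bigl|\|x^k-z\|-c\bigr|$, but only the upper half is a triangle inequality; the lower half is false in general (take $x^h-z=-(x^k-z)$ with both norms equal to $c$), and no argument using only the existence of $\lim_k\|x^k-z\|$ can yield Cauchyness, since a sequence rotating on a sphere centered at $z$ satisfies that hypothesis without converging. Second, the paper's stage (iii) rests on the unproved claim that $\{\sigma_k\}$ does not converge to $0$ because the $\sigma_k$ are ``mutually independent,'' which is not a mathematical justification; your explicit step-size floor is precisely the missing ingredient, and once it is in hand you obtain $d^k\to0$ outright rather than only a $\liminf$. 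So your proposal is not merely an alternative: it is a repaired version of the paper's argument, and the obstacle you flagged at the end --- that $d^k\to0$ must be extracted from the $\mathcal{F}$-descent estimate rather than from the Fej\'er inequality --- is exactly where the paper's own proof is deficient.
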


The strategy of proving Algorithm \ref{alg1} is to prove that the sequence $\{x_n\}$ is Cauchy, with the fact that every Cauchy sequence converges in complete space $\mathbb{R}^n$.

By Theorem \ref{thm1}, the sequence $\left\{x_n\right\}$ generated by Algorithm \ref{alg1} converges to a fixed point $x$ of
the fixed point equation Eq. (\ref{J}), {\em i.e.}, the optimal solution of Eq. (\ref{Lasso}).

Moreover, Algorithm \ref{alg2} is designed based on Algorithm \ref{alg1} with the weighted matrix constructed in terms of tree structure, that is, Algorithm \ref{alg2} is conducted by iteratively performing Algorithm \ref{alg1} $S-1$ times.
Thus, the convergence of Algorithm \ref{alg2} is guaranteed by the convergence of Algorithm \ref{alg1}.
We show the normalized function errors $\frac{\mathcal{F}(x^k)-\mathcal{F}(x^*)}{\mathcal{F}(x^k)}$ vs. number of iterations in Fig. \ref{error curve}, where the measurement rates range from $10\%$ to $30\%$.
We can see that the algorithm converges more and more fast (with less number of iterations) as the measurement rates increase.

\begin{figure*}[!hbpt]
\hspace*{+5pt}
\begin{minipage}[b]{.24\linewidth}
\includegraphics[width=4cm]{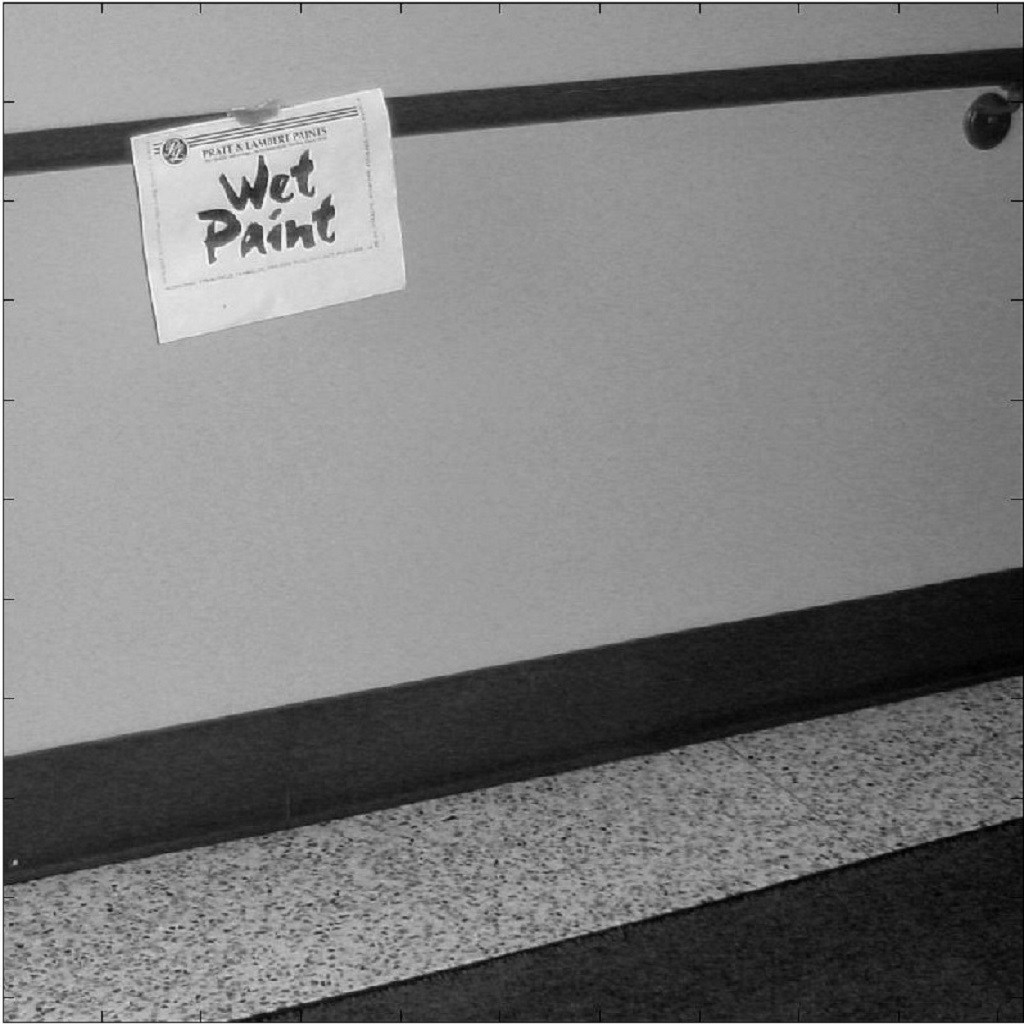}
  \hspace*{-5pt}\centerline{(a1)}
\end{minipage}
\begin{minipage}[b]{.24\linewidth}
\includegraphics[width=4cm]{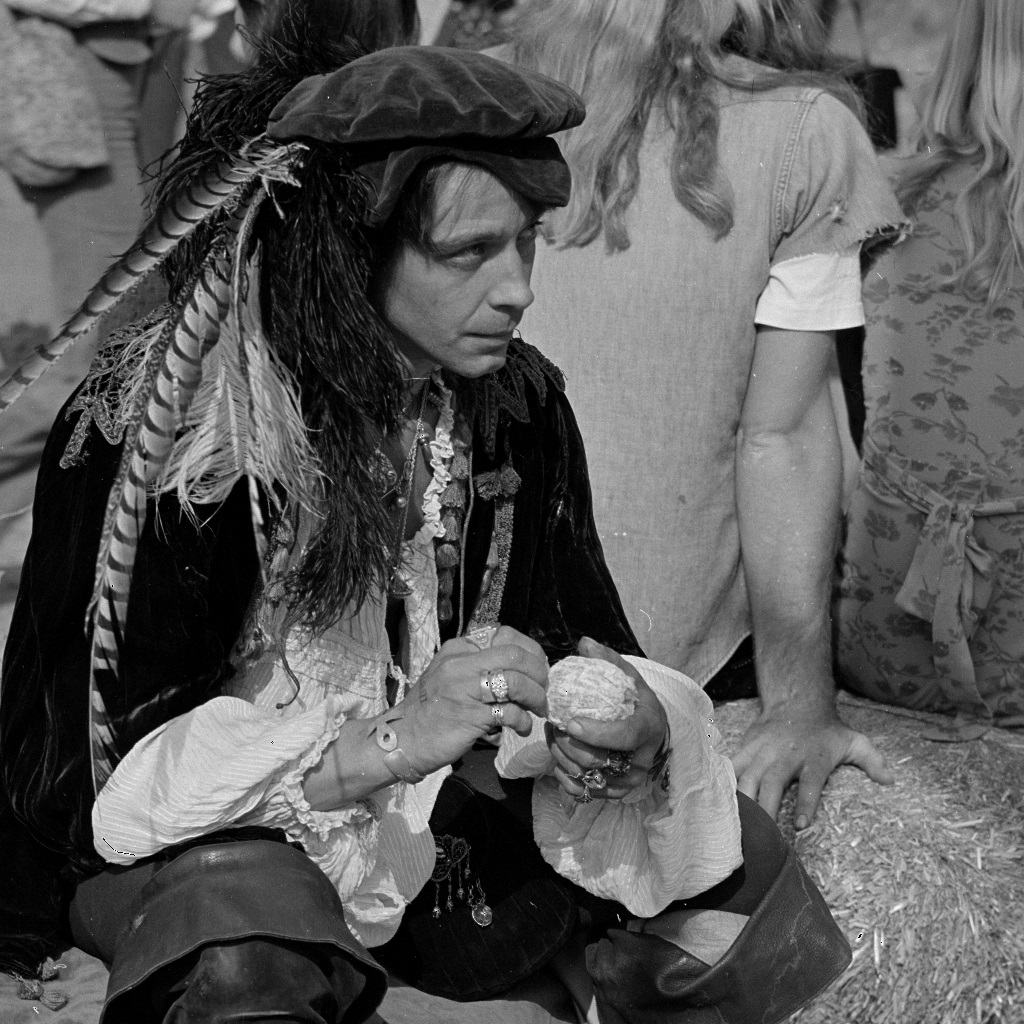}
  \hspace*{-5pt}\centerline{(a2)}
\end{minipage}
\begin{minipage}[b]{.24\linewidth}
\includegraphics[width=4cm]{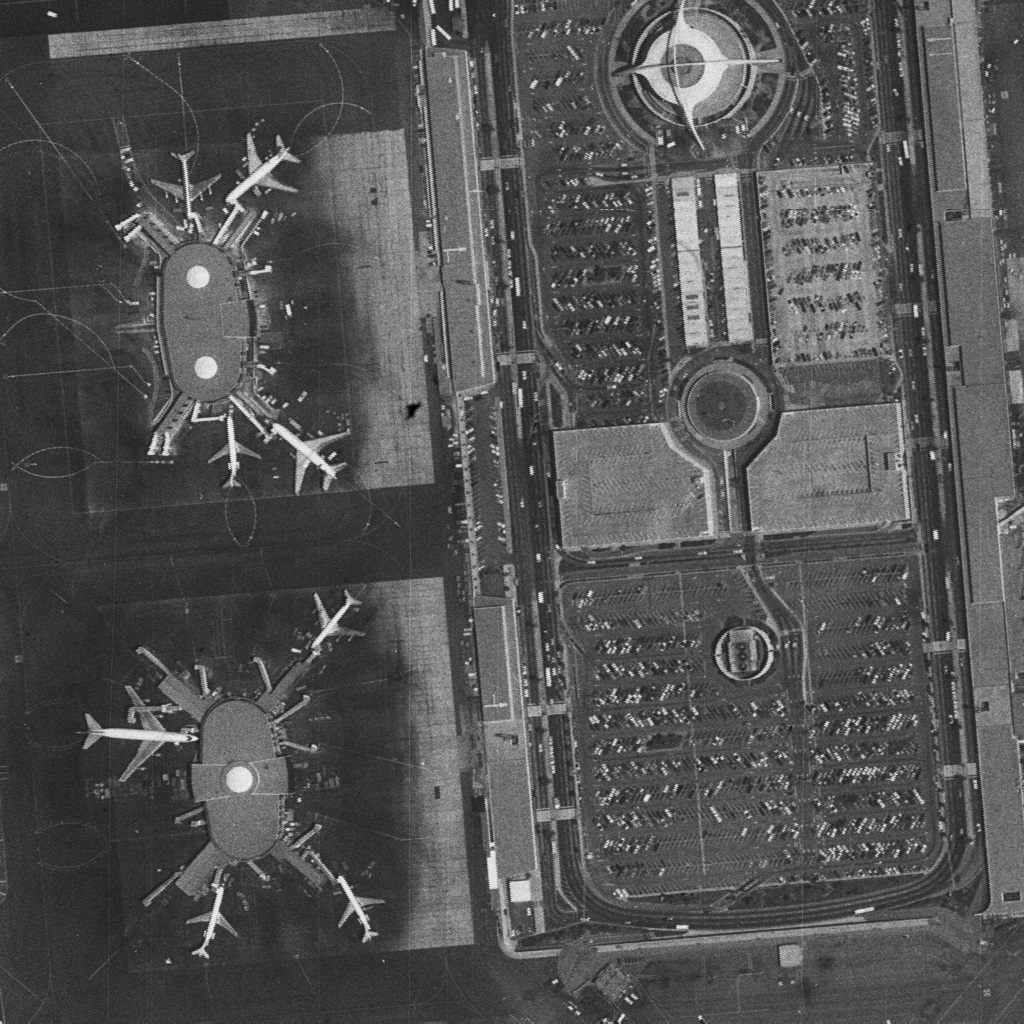}
  \hspace*{-5pt}\centerline{(a3)}
\end{minipage}
\begin{minipage}[b]{.24\linewidth}
\includegraphics[width=4cm]{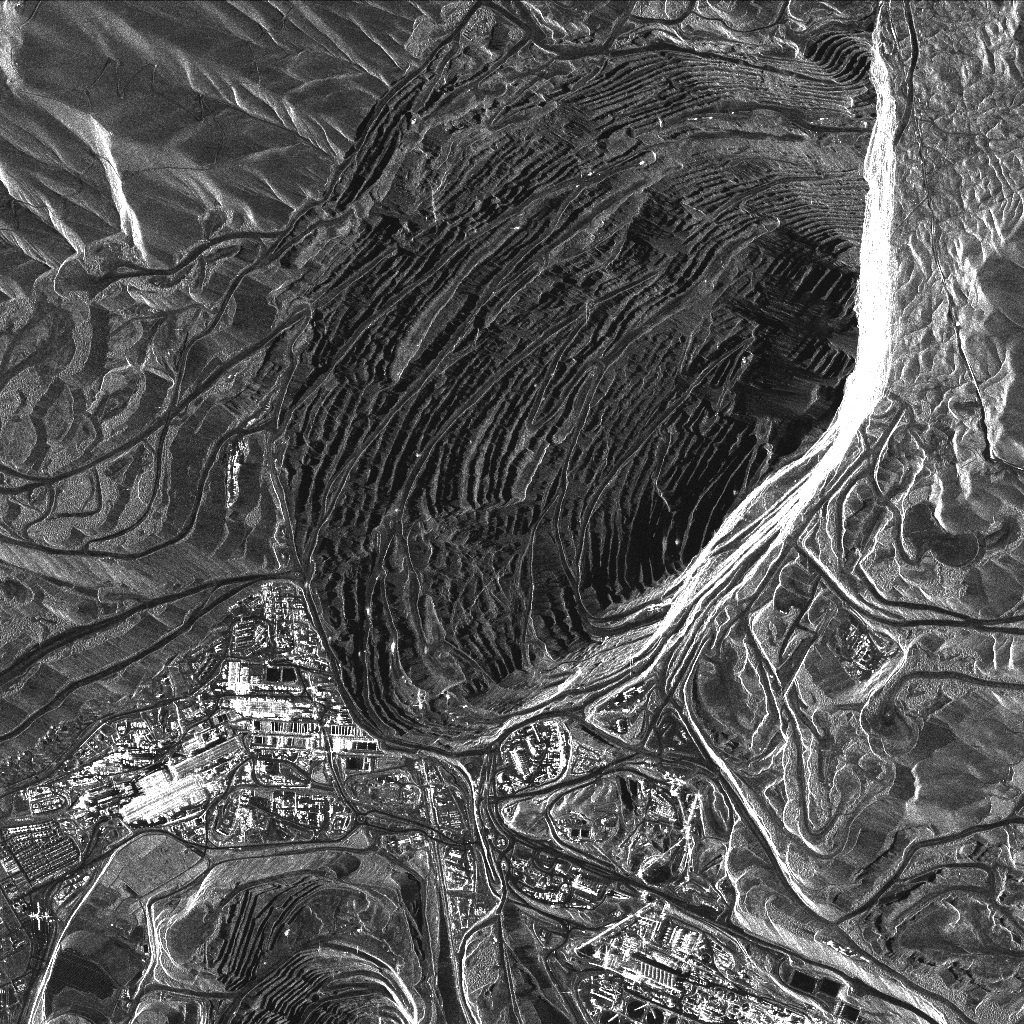}
  \hspace*{-5pt}\centerline{(a4)}
\end{minipage}\\
\begin{minipage}[b]{.24\linewidth}
\includegraphics[width=4.8cm]{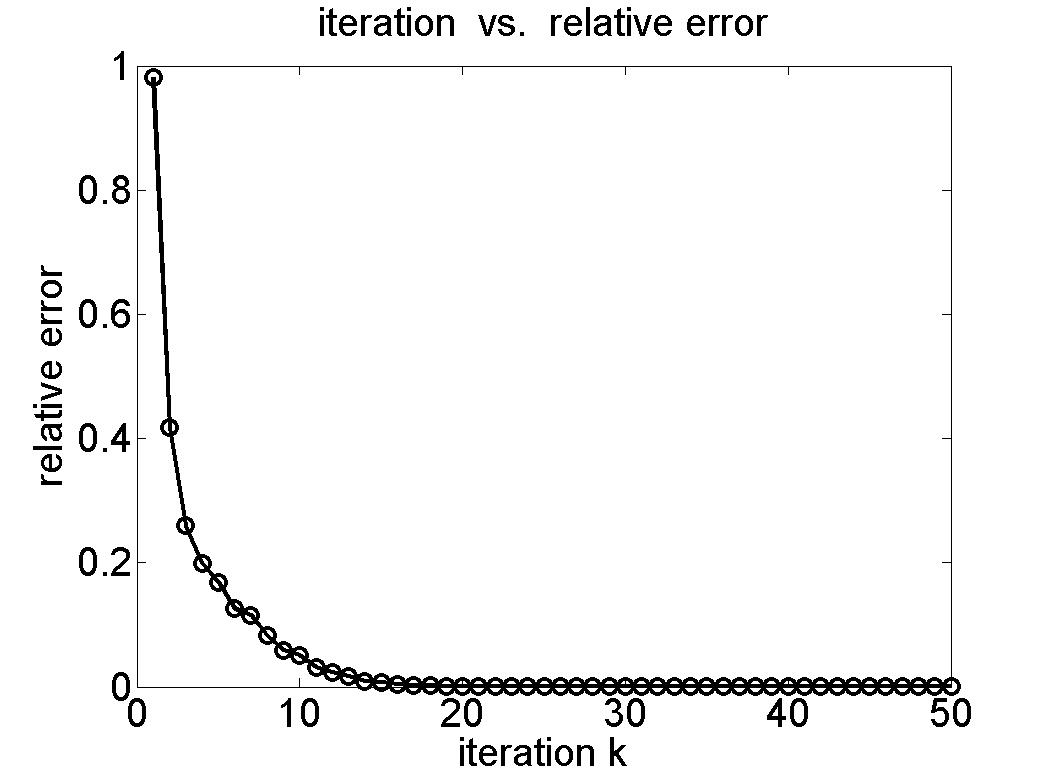}
  \centerline{(b1)}
\end{minipage}
\begin{minipage}[b]{.24\linewidth}
\includegraphics[width=4.8cm]{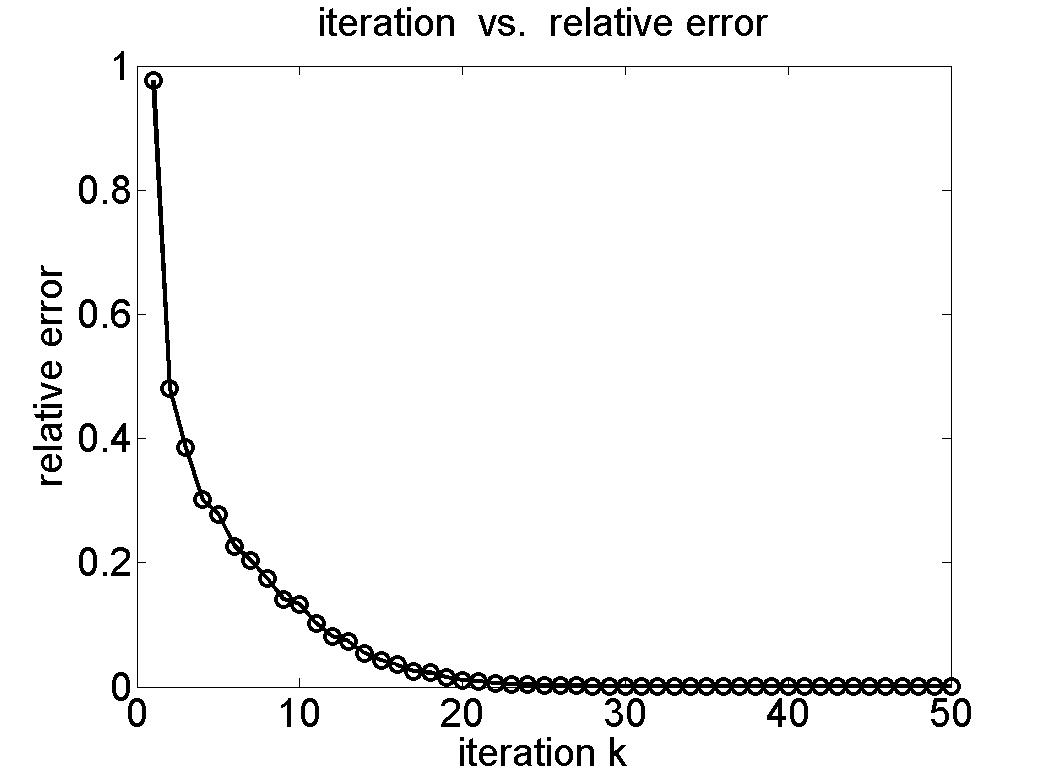}
  \centerline{(b2)}
\end{minipage}
\begin{minipage}[b]{.24\linewidth}
\includegraphics[width=4.8cm]{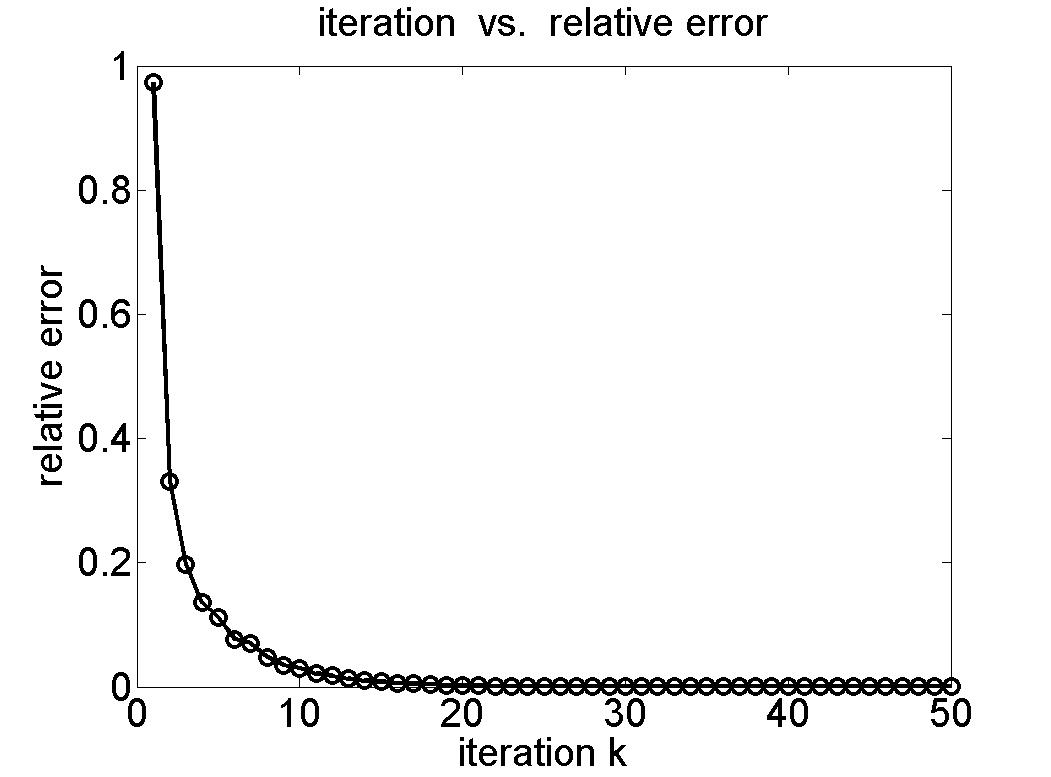}
  \centerline{(b3)}
\end{minipage}
\begin{minipage}[b]{.24\linewidth}
\includegraphics[width=4.8cm]{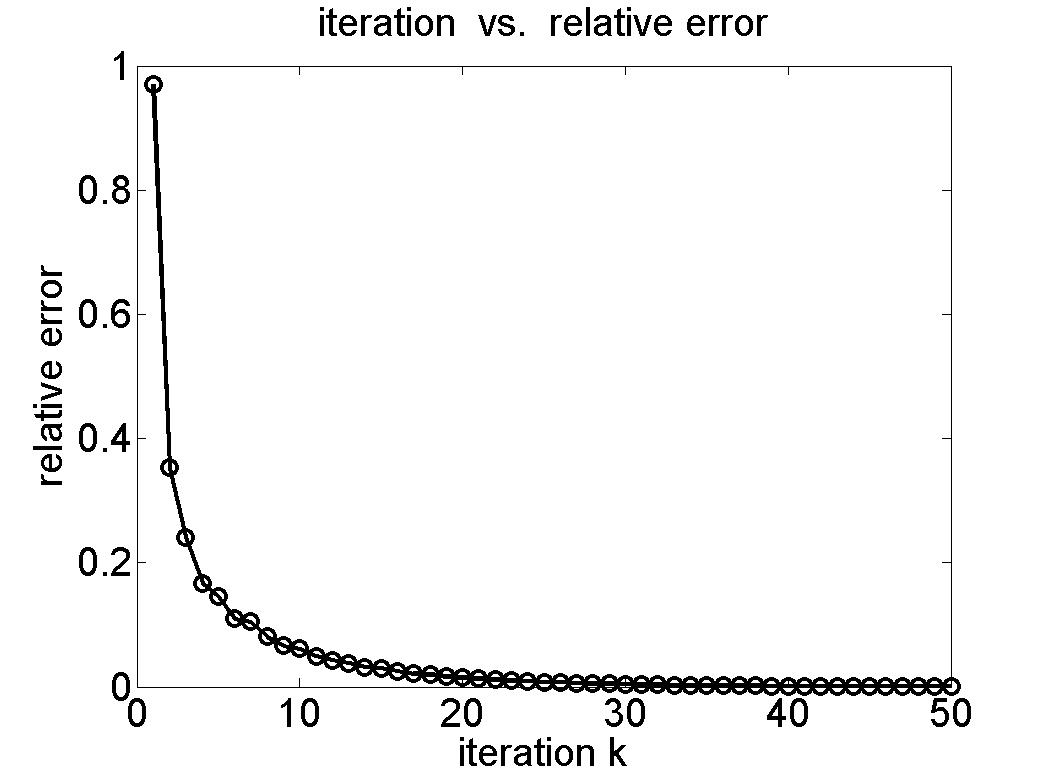}
  \centerline{(b4)}
\end{minipage}\\
\begin{minipage}[b]{.24\linewidth}
\includegraphics[width=4.8cm]{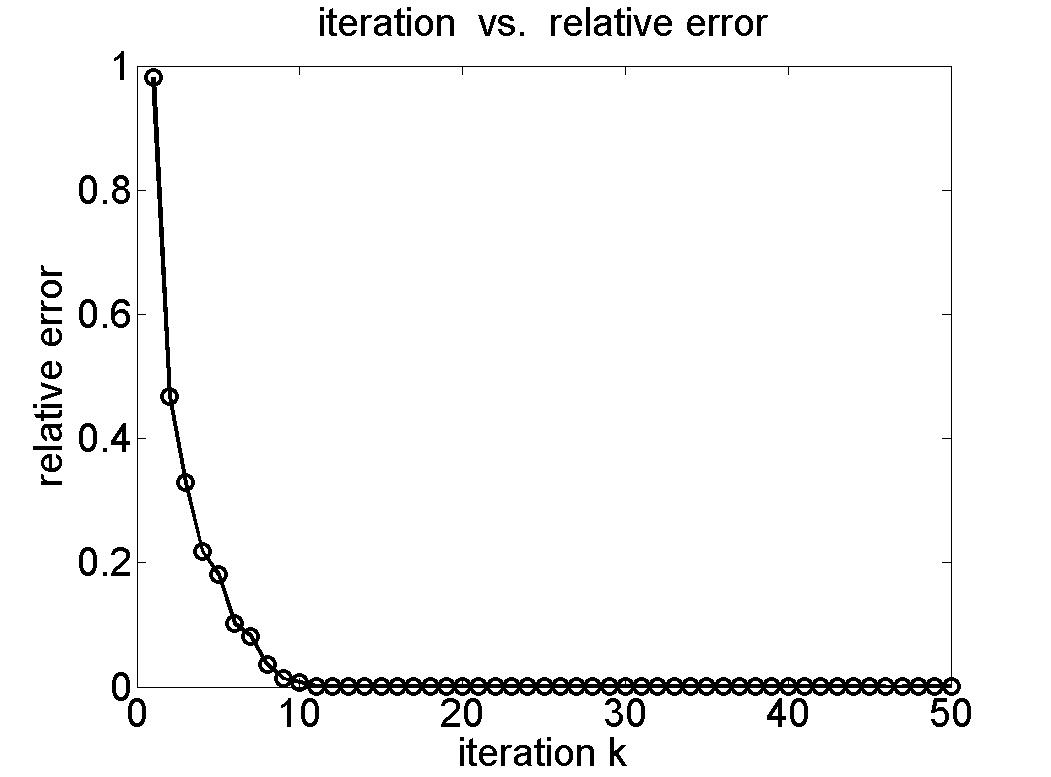}
  \centerline{(c1)}
\end{minipage}
\begin{minipage}[b]{.24\linewidth}
\includegraphics[width=4.8cm]{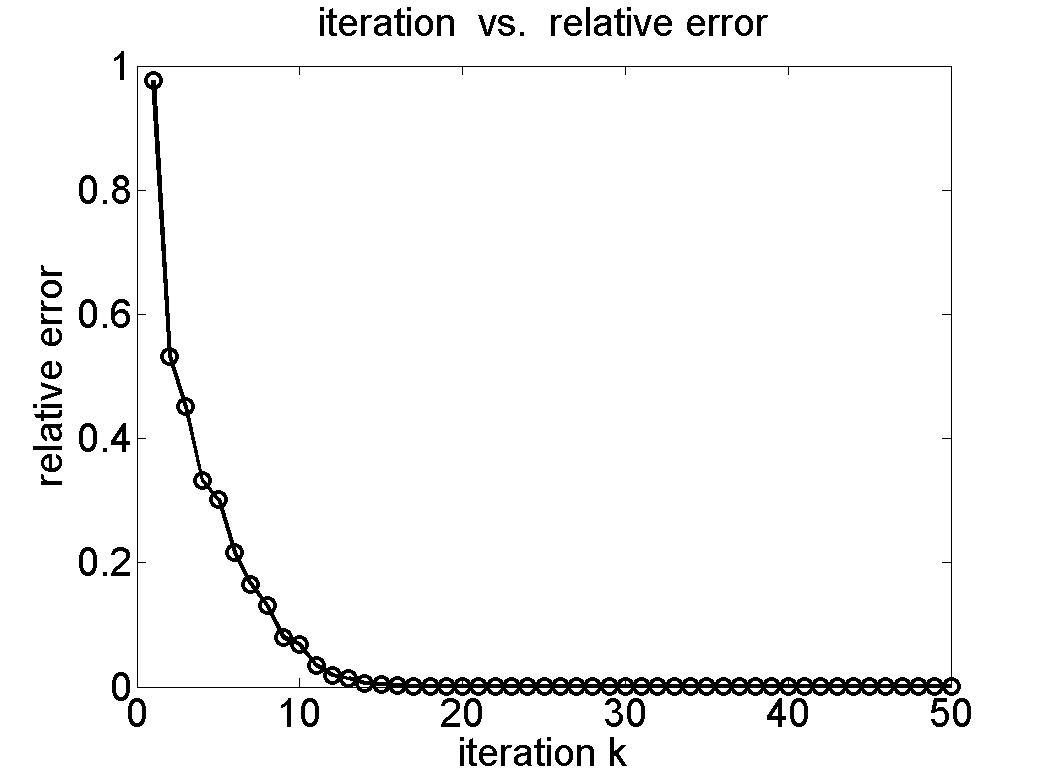}
  \centerline{(c2)}
\end{minipage}
\begin{minipage}[b]{.24\linewidth}
\includegraphics[width=4.8cm]{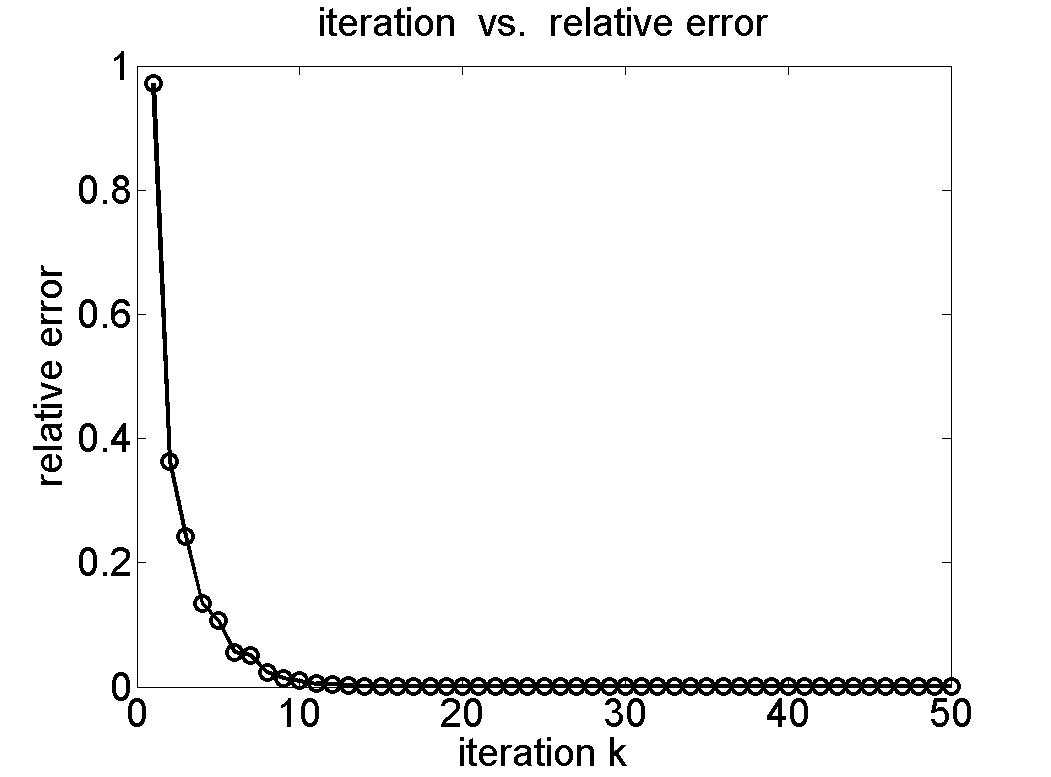}
  \centerline{(c3)}
\end{minipage}
\begin{minipage}[b]{.24\linewidth}
\includegraphics[width=4.8cm]{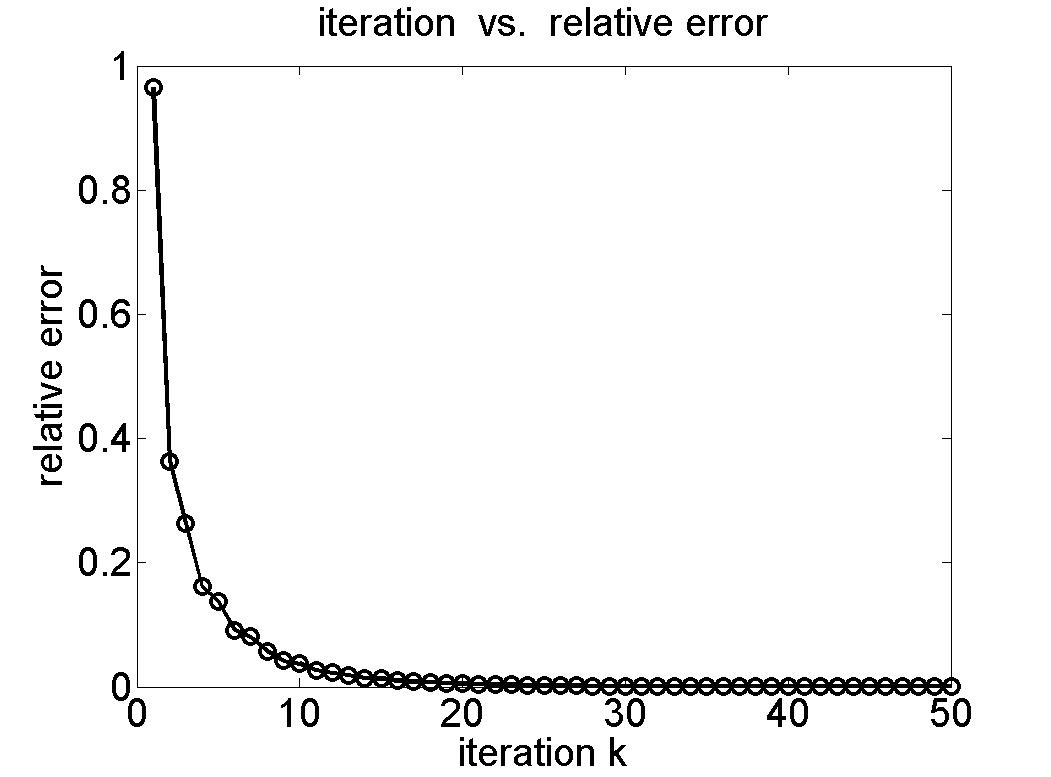}
  \centerline{(c4)}
\end{minipage}\\
\begin{minipage}[b]{.24\linewidth}
\includegraphics[width=4.8cm]{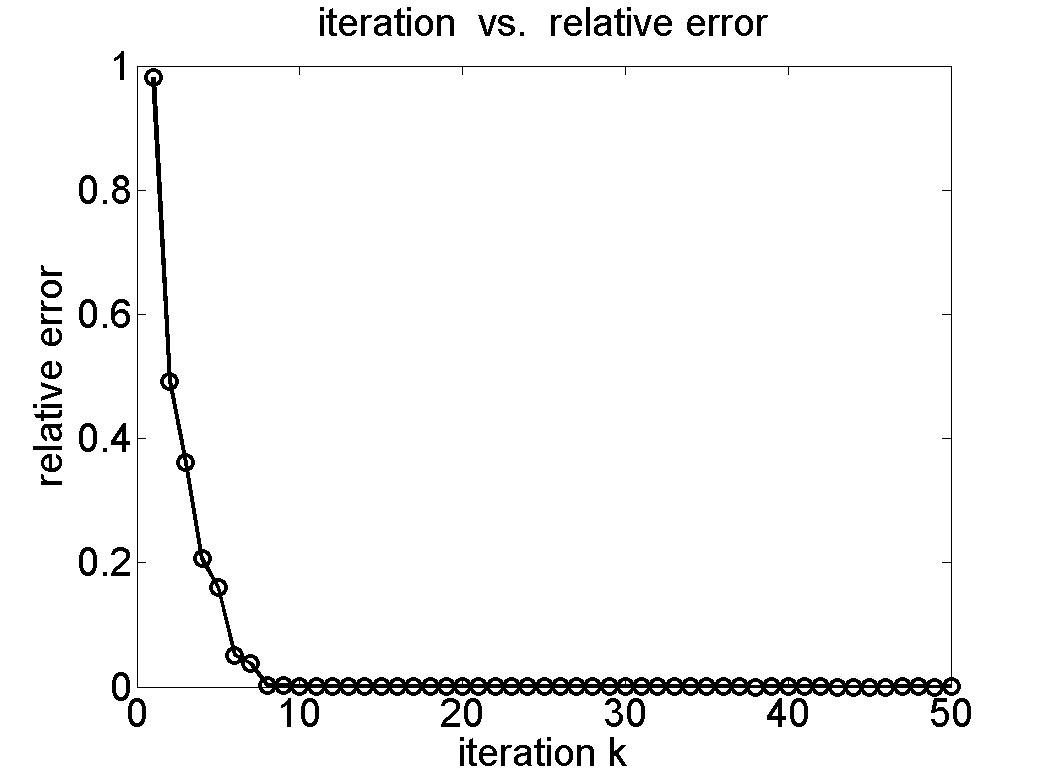}
  \centerline{(d1)}
\end{minipage}
\begin{minipage}[b]{.24\linewidth}
\includegraphics[width=4.8cm]{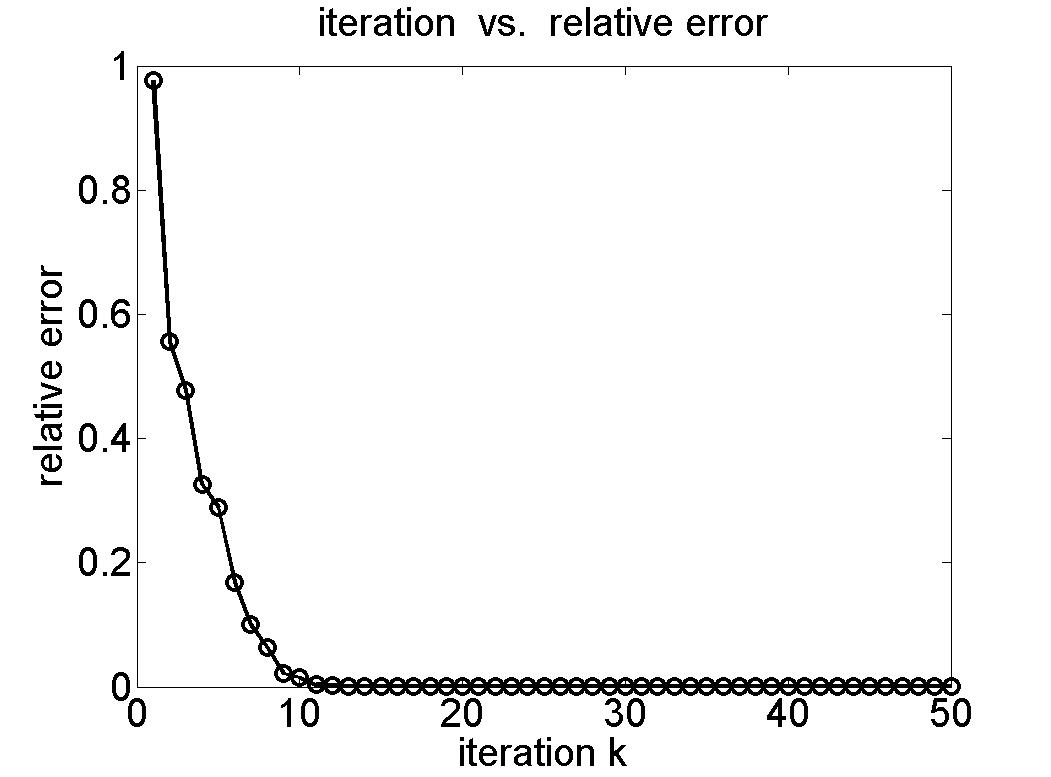}
  \centerline{(d2)}
\end{minipage}
\begin{minipage}[b]{.24\linewidth}
\includegraphics[width=4.8cm]{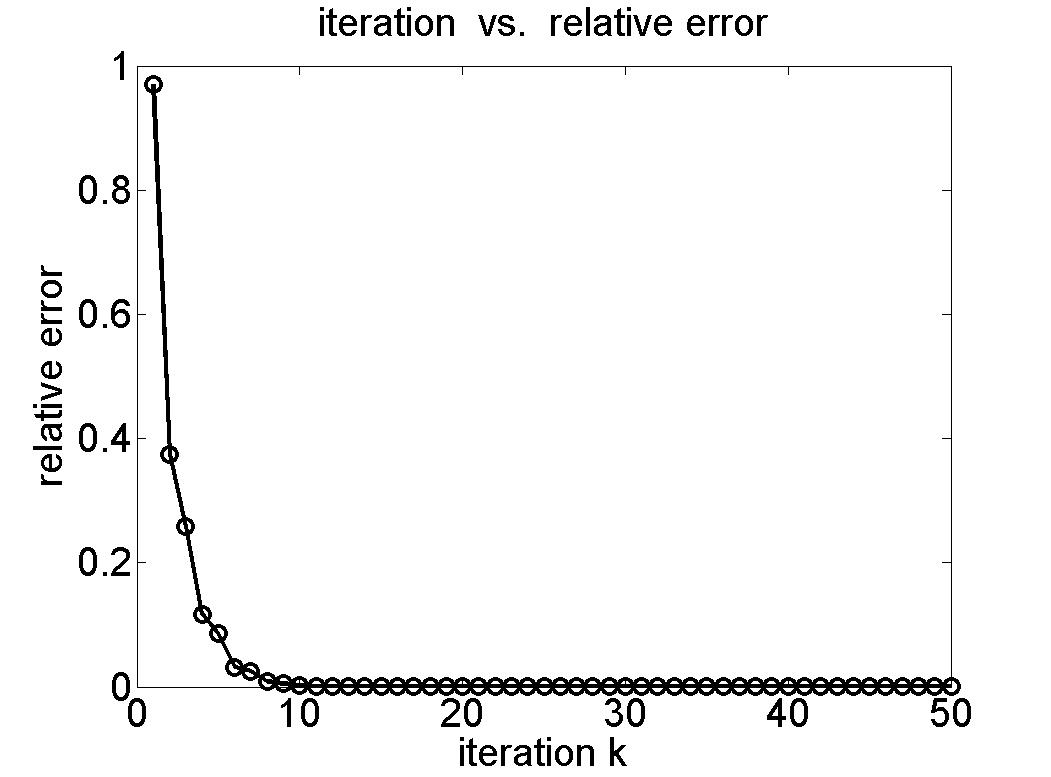}
  \centerline{(d3)}
\end{minipage}
\begin{minipage}[b]{.24\linewidth}
\includegraphics[width=4.8cm]{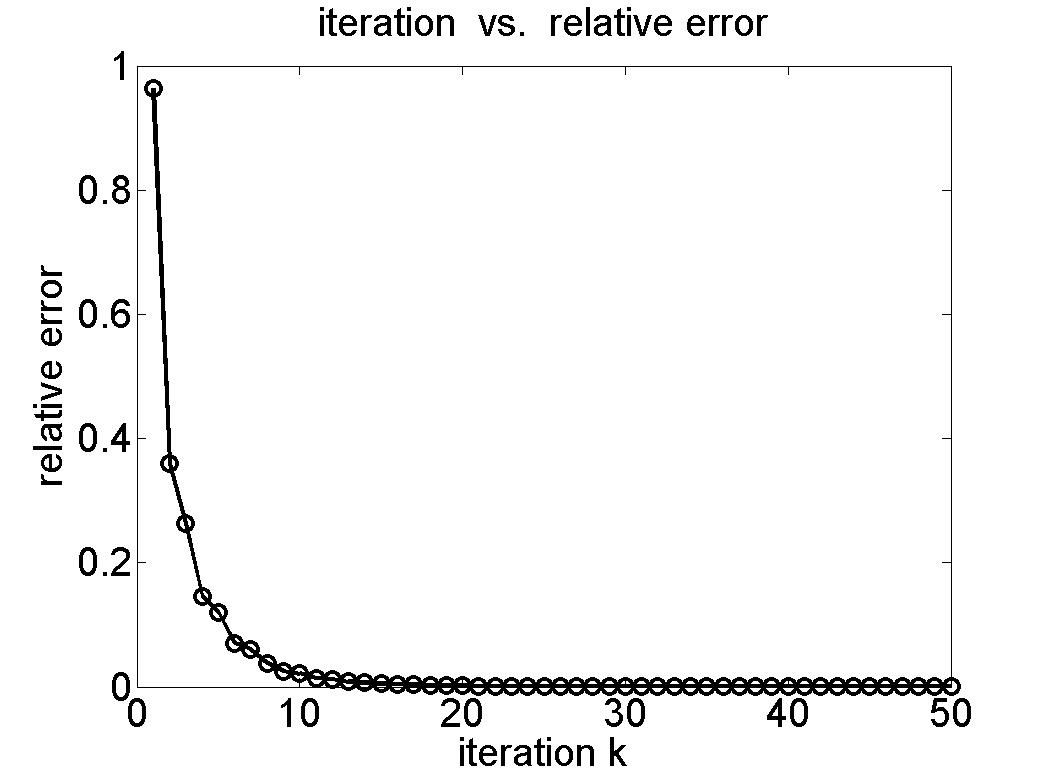}
  \centerline{(d4)}
\end{minipage}
\caption{The comparison of normalized function error
$\frac{\mathcal{F}(x^k)-\mathcal{F}(x^*)}{\mathcal{F}(x^k)}$ ($y$-axis) versus number of iteration ($x$-axis) under different measurement rates in row (b) $10\%$; row (c) $20\%$; and row (d) $30\%$ for the corresponding four images in row (a).}
\label{error curve}
\end{figure*}


\section{Simulation Results}
All simulations were conducted on PC equipped with Windows 7 with 3.4 GHz Intel Core i7 CPU and 8GB RAM.
We compare the proposed method, N-BOMP \cite{Caiafa13}, and \cite{Caiafa14} in terms of CPU time for CS recovery and reconstruction quality.
Here, we take 2D image of large sizes (ranging from $1024\times 1024$ to $4096\times 4096$) for compressive sensing and recovery, and show some results for images with different degree of sparsity in this section.
The images are shown in Fig. \ref{error curve} (a).
They are (from left to right) Paint, Man (the first two adopted in \cite{Caiafa13}), airport (a commonly used image in the literature), and a synthetic aperture radar (SAR) image (a TerraSAR-X image downloaded from http://www.geo-airbusds.com/en/23-sample-imagery).

\subsection{Parameter Setting}
The sensing matrix was a Structurally Random Matrix \cite{Do12}, the Daubechies wavelet was used as the sparsifying basis, and fixed point method with quasi-Armijo rule was adopted for CS recovery.
Note that the images to be sensed were not pre-processed in advance.


\subsection{Performance Comparison}
As mentioned previously, the four images of size $1024\times 1024$ shown in Fig. \ref{error curve} (a) were used as the targets for sensing.
Several measurement rates (MRs), ranging from $5\%$ to $50\%$, were selected for simulations.
The CPU time, the reconstruction quality measured in Peak-Signal-to-Noise-Ratio (PSNR), and the reconstruction quality measured in structural similarity (SSIM) \cite{ssim} were adopted as the criteria for comparison among our method (Algorithm \ref{alg2}), N-BOMP \cite{Caiafa13}, and \cite{Caiafa14}.
Some results are shown in the following figures.

\begin{figure}[!h]
\hspace*{-0.3cm}\includegraphics[width=9.5cm]{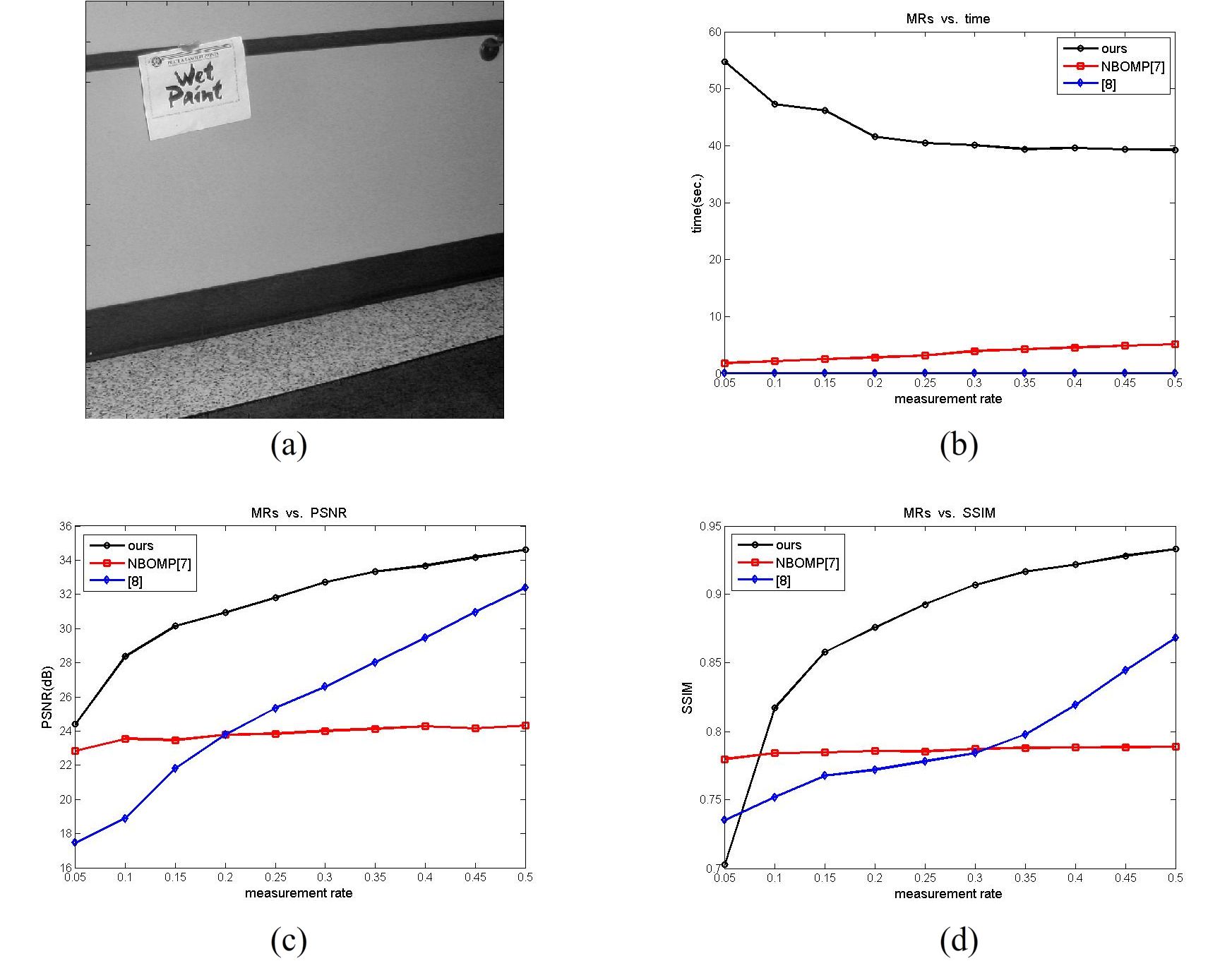}
\caption{Comparison of time/PSNR/SSIM of the ``Paint'' image among Algorithm \ref{alg2}, NBOMP \cite{Caiafa13}, and \cite{Caiafa14}.}
\label{paint}
\end{figure}

\begin{figure}[!h]
\hspace*{-0.3cm}\includegraphics[width=9.5cm]{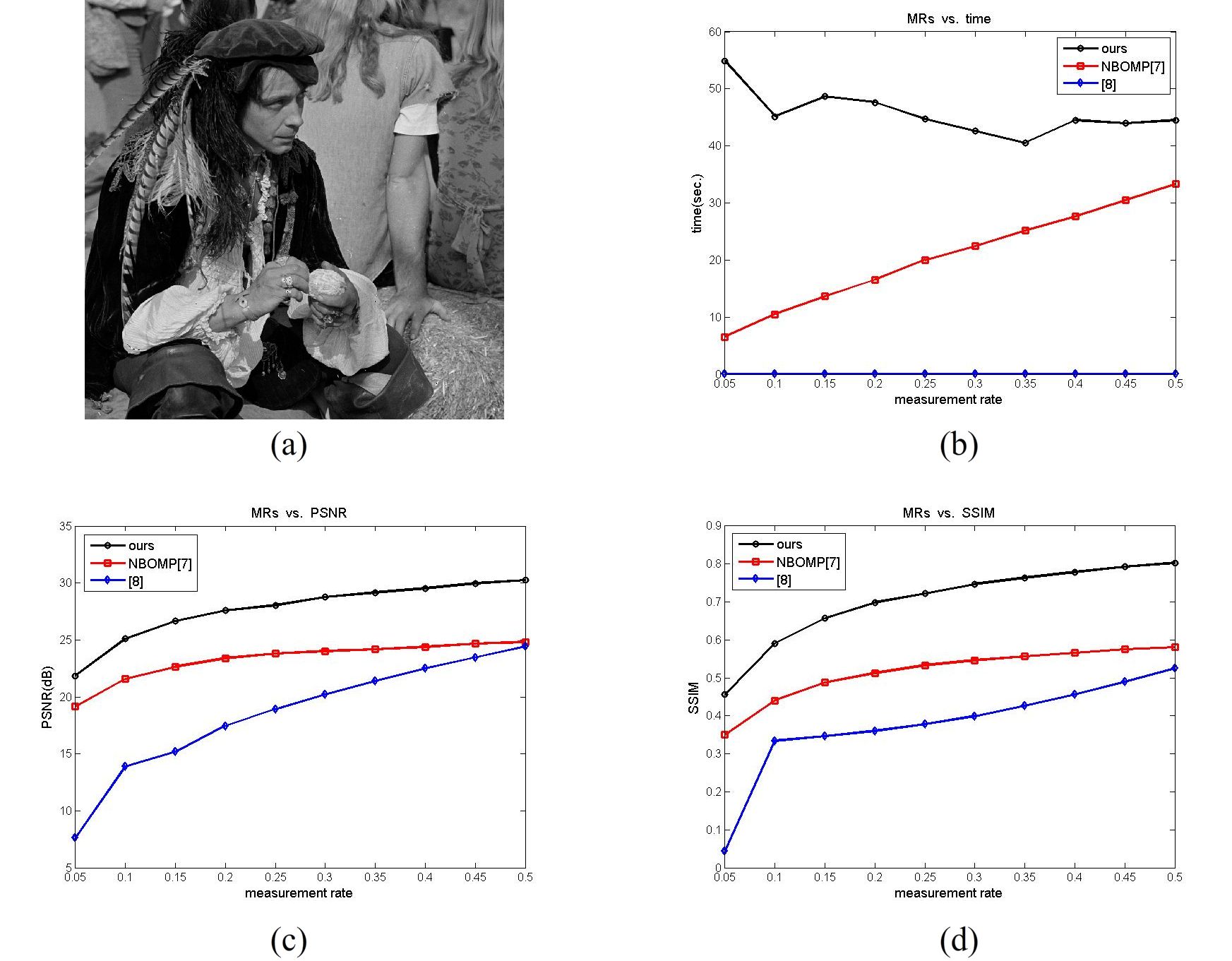}
\caption{Comparison of time/PSNR/SSIM of the ``Man'' image among Algorithm \ref{alg2}, NBOMP \cite{Caiafa13}, and \cite{Caiafa14}.}
\label{man}
\end{figure}

\begin{figure}[!h]
\hspace*{-0.3cm}\includegraphics[width=9.5cm]{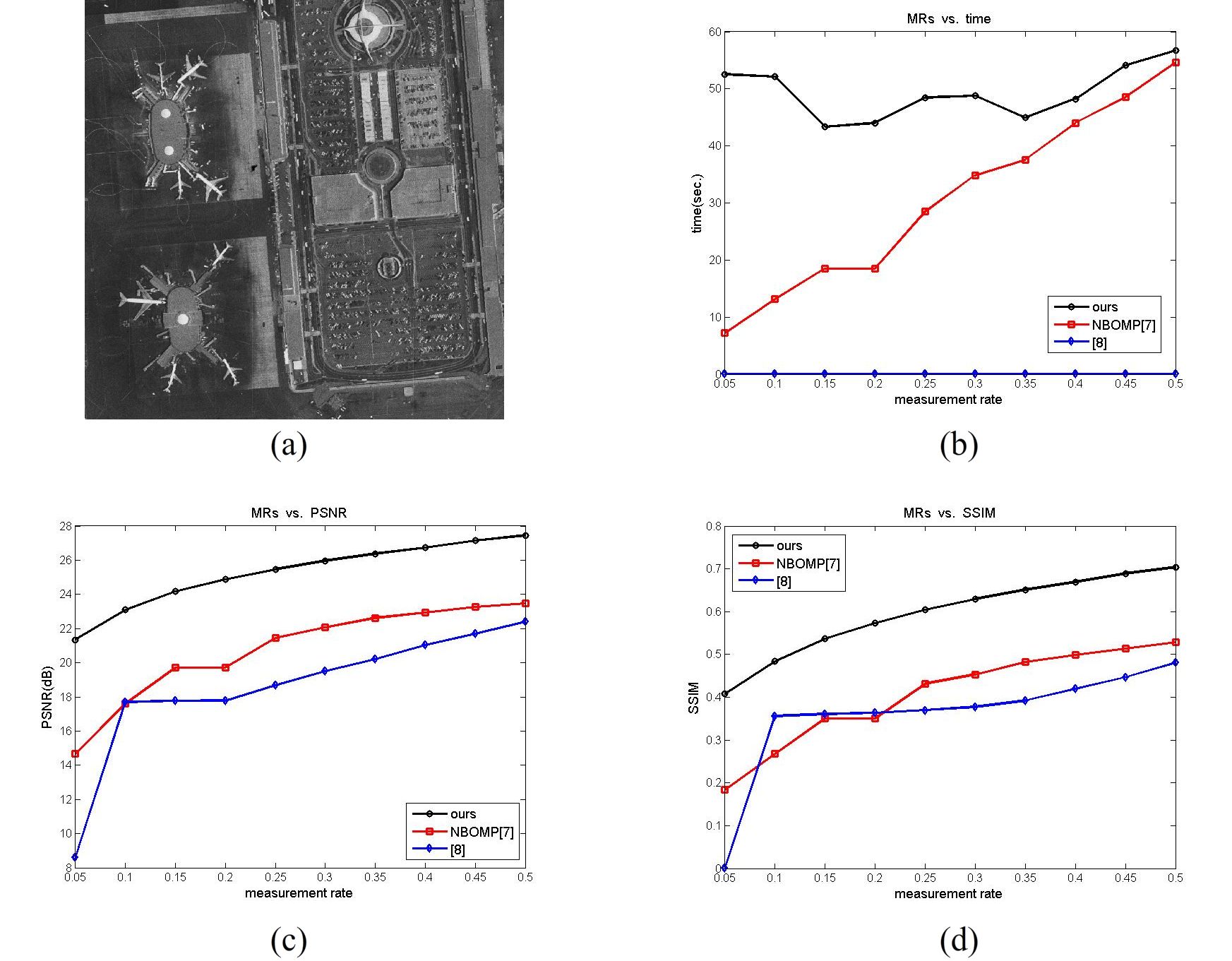}
\caption{Comparison of time/PSNR/SSIM of the ``Airport'' image among Algorithm \ref{alg2}, NBOMP \cite{Caiafa13}, and \cite{Caiafa14}.}
\label{airport}
\end{figure}

\begin{figure}[!h]
\hspace*{-0.3cm}\includegraphics[width=9.5cm]{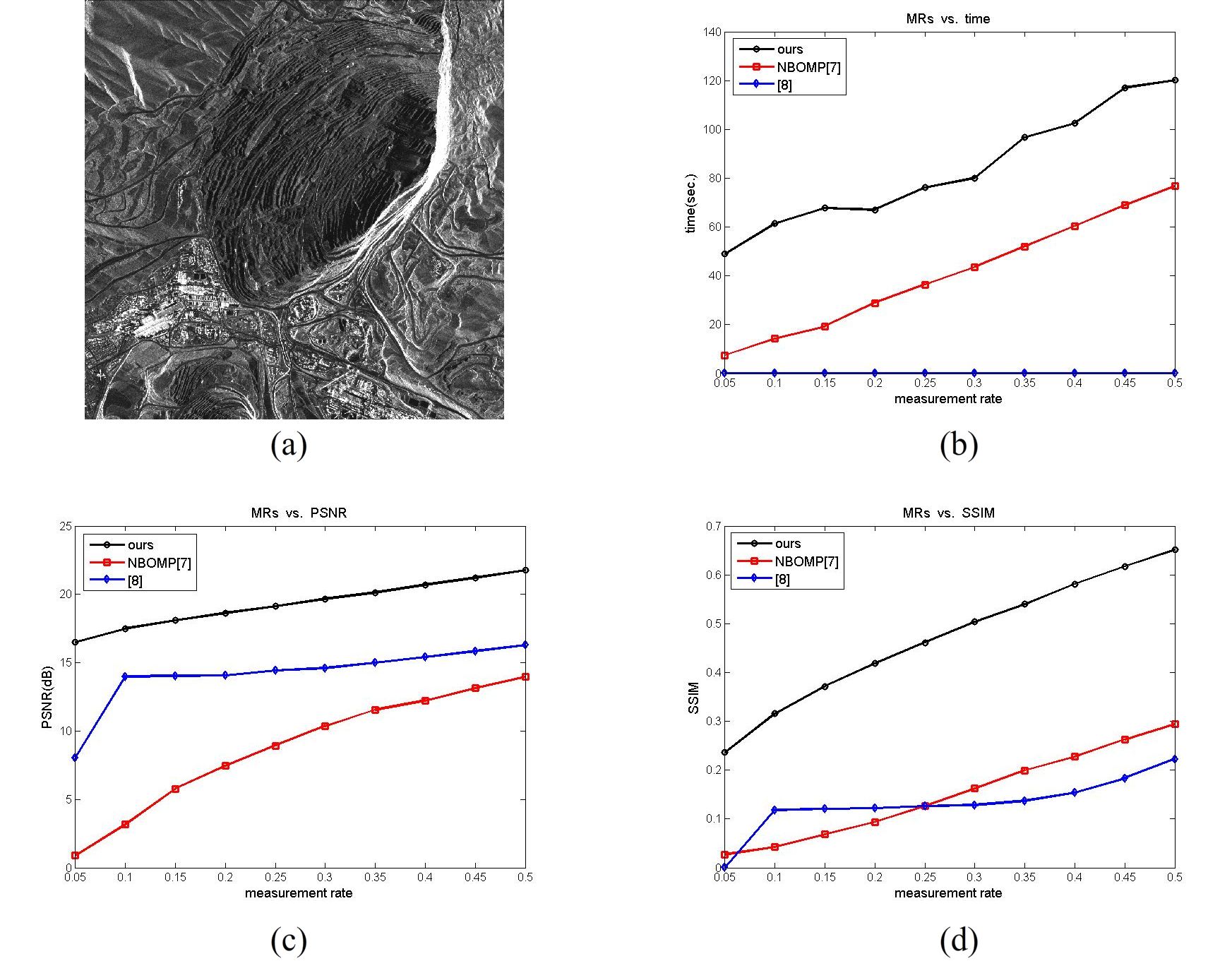}
\caption{Comparison of time/PSNR/SSIM of an SAR image among Algorithm \ref{alg2}, NBOMP \cite{Caiafa13}, and \cite{Caiafa14}.}
\label{SAR}
\end{figure}

From Fig. \ref{paint} to Fig. \ref{SAR}, we can find that under the same measurement rates, our method obtains the best reconstruction qualities in term of PSNR and SSIM within reasonable recovery time.
Although our method needs more computational time than the other two methods, it does not rely on any impractical assumptions and restrictions (block sparsity or low rank) and is flexible in practice.
In particular, we do not think it is meaningful and impressive to quickly yield inaccurate results.


Furthermore, for visual comparison, we actually observe that the image details can be well recovered by our method.
This also explains the usefulness of tree-structure sparsity pattern imposed in the $l_1$-norm minimization formulation (Eq. (\ref{Lasso_weighted})).
In fact, our simulations also show that when weighting is imposed, the PSNR gain of $0.5\sim 1$ dB can be obtained, when compared to its non-weighting counterpart.

Finally, when the popularly used CVX software is considered for CS recovery, our results show that basically CVX cannot deal with images larger than $100\times 100$ due to high cost of memory and computation.





\section{Conclusion}
Compressive sensing of large-scale images is remarkably challenging due to the constraints of storage and computation.
In this paper, we propose a new method of large-scale image compressive sensing based on exploring a fixed-point weighted-LASSO algorithm without depending on any assumption or preprocessing of sparsity pattern in images.
Convergence analysis is also provided to confirm the convergence of our iterative scheme.


%

\appendices
\section{Proof of Theorem 1}
To prove Theorem 1, we need the following Lemmata.
\begin{Lem}
\label{nex}
The operator $S_{\tau\mu}(\cdot)$
in Eq. (\ref{fixed point}) is nonexpansive
(Lemma 3.2 in \cite{W.Yin08});
$G_{\tau}(\cdot)$
is nonexpansive for an appropriate parameter $\tau$
(Lemma 4.1 and Eq. (4.3) in \cite{W.Yin08});
and thus $\left(S_{\tau\mu}\circ G_{\tau}\right)(\cdot)$ is as well. Moreover, $\left(S_{\tau\mu}\circ G_{\tau}\right)(\cdot)$ is continuous. 
\end{Lem}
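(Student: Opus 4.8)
The plan is to treat the three nonexpansiveness claims separately and then chain them, reserving continuity as an immediate corollary. Since $S_{\tau\lambda}$ and $G_\tau$ are given explicitly in Eq.~(\ref{fixed point}), I would supply self-contained verifications of their nonexpansiveness (these are Lemma~3.2 and Lemma~4.1 of \cite{W.Yin08}), then deduce nonexpansiveness of the composition from the fact that a composition of $1$-Lipschitz maps is $1$-Lipschitz, and finally obtain continuity for free, since any nonexpansive map is $1$-Lipschitz and hence continuous.

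For $S_{\tau\lambda}=\mathrm{Id}-\mathcal{P}_{[-\tau\lambda,\tau\lambda]}$ I would argue coordinatewise: the box projection acts componentwise, and each scalar soft-threshold $t\mapsto t-\mathcal{P}_{[-\tau\lambda,\tau\lambda]}(t)$ is piecewise affine with slopes in $\{0,1\}$, hence $1$-Lipschitz; summing squared coordinate differences then gives $\|S_{\tau\lambda}(x)-S_{\tau\lambda}(y)\|_2\le\|x-y\|_2$. Equivalently, $\mathcal{P}_{[-\tau\lambda,\tau\lambda]}$ is a projection onto a closed convex set and is therefore firmly nonexpansive, so its complement $\mathrm{Id}-\mathcal{P}_{[-\tau\lambda,\tau\lambda]}$ is firmly nonexpansive and in particular nonexpansive.

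For $G_\tau(x)=x-\tau A^T(Ax-y)=(I-\tau A^TA)x+\tau A^Ty$ I would use that it is affine with symmetric linear part $M:=I-\tau A^TA$, so its Lipschitz constant equals the spectral norm $\|M\|_2$. The eigenvalues of $M$ are $1-\tau\mu$ as $\mu$ ranges over the eigenvalues of $A^TA$, which lie in $[0,\hat{\lambda}_{\max}]$; the standing choice $\tau\in(0,2/\hat{\lambda}_{\max})$ from the convergence analysis forces $\tau\mu\in[0,2)$ and hence $|1-\tau\mu|\le1$, so $\|M\|_2\le1$ and $G_\tau$ is nonexpansive. Chaining the two bounds yields $\|S_{\tau\lambda}(G_\tau(x))-S_{\tau\lambda}(G_\tau(y))\|_2\le\|G_\tau(x)-G_\tau(y)\|_2\le\|x-y\|_2$, so $S_{\tau\lambda}\circ G_\tau$ is nonexpansive; being $1$-Lipschitz it is continuous, which settles the final assertion.

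The only step that is not purely formal is the nonexpansiveness of $G_\tau$: it holds precisely on the admissible range $\tau\in(0,2/\hat{\lambda}_{\max})$, and the whole point is that this is exactly where the spectral bound $\|I-\tau A^TA\|_2\le1$ is valid. I therefore expect the main (and rather modest) obstacle to be making the eigenvalue and step-size bookkeeping explicit and tying $\hat{\lambda}_{\max}$ to $\lambda_{\max}(A^TA)$; once the admissible interval for $\tau$ is pinned down, everything else follows from standard nonexpansiveness calculus.
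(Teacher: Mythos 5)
Your proposal is correct, but it takes a genuinely different route from the paper: the paper does not prove Lemma~\ref{nex} at all, and instead takes it entirely on citation (nonexpansiveness of $S_{\tau\lambda}$ from Lemma~3.2 of \cite{W.Yin08}, nonexpansiveness of $G_\tau$ from Lemma~4.1 and Eq.~(4.3) of \cite{W.Yin08}), with the composition and continuity claims left as immediate consequences. You instead supply self-contained verifications: the coordinatewise soft-thresholding argument (or, equivalently, the firmly-nonexpansive-complement argument) for $S_{\tau\lambda}$, and a spectral bound $\left\|I-\tau A^TA\right\|_2\le 1$ for $G_\tau$. The spectral argument is a legitimate simplification available here because the data-fit term is quadratic, so the Hessian is the constant matrix $A^TA$ and $\hat{\lambda}_{\max}=\lambda_{\max}\left(A^TA\right)$ trivially; by contrast, the cited lemma in \cite{W.Yin08} handles general convex $f$ with Lipschitz gradient via a co-coercivity-type argument, which is what the citation inherits in generality but hides in detail. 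Your bookkeeping is also sound: for $\tau\in\left(0,2/\hat{\lambda}_{\max}\right)$ and eigenvalues $\mu\in\left[0,\hat{\lambda}_{\max}\right]$ of $A^TA$ one gets $1-\tau\mu\in(-1,1]$, hence $\left|1-\tau\mu\right|\le 1$, and the chaining of the two $1$-Lipschitz bounds plus the observation that nonexpansive maps are continuous disposes of the last two assertions exactly as the paper intends. In short, the paper's approach buys brevity and generality by outsourcing; yours buys transparency, makes the admissible range of $\tau$ explicit, and would make the paper's convergence analysis self-contained.
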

\begin{Lem}
\label{para}
In a normed vector space $(X,\|\cdot\|)$, the following identical equation is called the parallelogram law:
$$2\|x\|^2+2\|y\|^2=\|x+y\|^2+\|x-y\|^2, \quad \forall x,y\in X.$$
\end{Lem}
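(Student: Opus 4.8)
The plan is to prove the identity directly by expanding each squared norm through the underlying inner product. The crucial observation is that in our setting the space is $\mathbb{R}^n$ equipped with the Euclidean norm, so the norm arises from an inner product $\langle\cdot,\cdot\rangle$ via $\|v\|^2=\langle v,v\rangle$. It is precisely this inner-product structure that makes the parallelogram law available: in a genuinely arbitrary normed space the identity can fail, and by the Jordan--von Neumann theorem it holds if and only if the norm is induced by an inner product. I read the statement, therefore, with the Euclidean norm of $\mathbb{R}^n$ in mind, which is the norm used throughout the convergence analysis.

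First I would expand the two terms on the right-hand side using the bilinearity and symmetry of the inner product:
\begin{align}
\|x+y\|^2 &= \langle x+y,\,x+y\rangle = \|x\|^2 + 2\langle x,y\rangle + \|y\|^2, \\
\|x-y\|^2 &= \langle x-y,\,x-y\rangle = \|x\|^2 - 2\langle x,y\rangle + \|y\|^2.
\end{align}
Next I would add these two equations term by term. The mixed contributions $+2\langle x,y\rangle$ and $-2\langle x,y\rangle$ cancel exactly, leaving
\begin{equation}
\|x+y\|^2 + \|x-y\|^2 = 2\|x\|^2 + 2\|y\|^2,
\end{equation}
which is the claimed identity, valid for all $x,y\in X$.

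There is no substantive obstacle here: the proof reduces to a two-line symmetric expansion in which the cross terms cancel by construction. The only point deserving care is conceptual rather than computational, namely recognizing that the identity is not a property of arbitrary norms but depends on the inner-product structure of the space; once the norm is fixed to be the Euclidean norm on $\mathbb{R}^n$, as it is in every application of this lemma, the computation above is immediate.
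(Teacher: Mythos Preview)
Your proof is correct, and in fact the paper states this lemma without proof, so there is nothing to compare against on the paper's side. Your observation that the identity as written cannot hold in an arbitrary normed space is well taken: the lemma is only true when the norm comes from an inner product, and the Jordan--von Neumann characterization you cite is exactly the right caveat. Since every use of the lemma in the convergence analysis is in $\mathbb{R}^n$ with the Euclidean norm, your inner-product expansion is the appropriate argument and settles the matter in two lines.
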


\begin{Lem}
\label{liminf}
Let $\{a^k\},\{b^k\}$ be non-negative sequences,
and $\{a^k\}$ does not converge to 0.
If $\displaystyle\lim_{k\rightarrow \infty}a^kb^k=0$, then $\displaystyle\liminf_{k\rightarrow \infty} b^k=0.$
\end{Lem}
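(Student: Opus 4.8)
The plan is to convert the qualitative hypothesis that $\{a^k\}$ fails to converge to $0$ into a quantitative one by extracting a subsequence on which $a^k$ stays bounded away from $0$, and then to use $a^k b^k \to 0$ to squeeze $b^k$ to $0$ along that same subsequence. Since $\liminf_{k} b^k$ is bounded above by the limit of $b^k$ taken over any subsequence, exhibiting a single subsequence of $\{b^k\}$ that tends to $0$ already forces the conclusion.

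First I would negate the statement ``$a^k \to 0$''. Because each $a^k \ge 0$, this negation reads: there exists $\varepsilon > 0$ such that for every $N$ there is an index $k \ge N$ with $a^k \ge \varepsilon$. Applying this with $N = 1, 2, \dots$ produces a strictly increasing sequence of indices $\{k_j\}$ satisfying $a^{k_j} \ge \varepsilon$ for all $j$. This is the sole use of the hypothesis on $\{a^k\}$, and it is the crux of the whole argument.

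Next, along this subsequence I would use $a^{k_j} \ge \varepsilon > 0$ together with $b^{k_j} \ge 0$ to write
\begin{equation*}
0 \le b^{k_j} = \frac{a^{k_j} b^{k_j}}{a^{k_j}} \le \frac{a^{k_j} b^{k_j}}{\varepsilon}.
\end{equation*}
Since $a^k b^k \to 0$ and $\{a^{k_j} b^{k_j}\}$ is a subsequence of $\{a^k b^k\}$, the right-hand side tends to $0$, so by the squeeze theorem $b^{k_j} \to 0$. Finally, non-negativity of $\{b^k\}$ gives $\liminf_{k} b^k \ge 0$, while the existence of a subsequence converging to $0$ gives $\liminf_{k} b^k \le \lim_{j} b^{k_j} = 0$; combining these yields $\liminf_{k} b^k = 0$. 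The argument is entirely elementary, the only point requiring care being the correct extraction of the subsequence from the negated convergence statement.
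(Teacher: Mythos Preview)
Your proof is correct and follows essentially the same route as the paper's: both extract, from the failure of $a^k\to 0$, a subsequence along which $a^{k_j}\ge\varepsilon>0$, use this lower bound together with $a^{k_j}b^{k_j}\to 0$ to squeeze $b^{k_j}\to 0$, and then conclude $\liminf_k b^k=0$ by comparing with the subsequential limit. If anything, your extraction of a strictly increasing index sequence is marginally cleaner than the paper's ``$n_j\ge j$'' formulation.
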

\begin{proof}
Since $\{a^k\}$ does not converge to 0, there exists $\epsilon_1>0$ {\em s.t.} for all $j\in \mathbb{N}$,
we can find $n_j \geq j$ so that $|a^{n_j}-0|>\epsilon_1$.\\
We have
$$0=\lim_{j\rightarrow \infty}a^{n_j}b^{n_j}\geq \lim_{j\rightarrow \infty}\epsilon_1b^{n_j}\geq 0,$$
which implies
$\lim\limits_{j\rightarrow \infty}b^{n_j}=0$.
In other words, the limit of subsequence $\left\{b^{n_j}\right\}$
exists, and hence
$$\lim_{k\rightarrow\infty}\inf_{n_j\geq k}b^{n_j}
=\lim_{j\rightarrow\infty}b^{n_j}.$$

\noindent
By the fact that $\{b^j\}_{j\geq k}\supset \{b^{n_j}\}_{n_j\geq k}$,
we have
$$\inf_{j\geq k}b^j \leq \inf_{n_j \geq k}b^{n_j}.$$
Thus
$$0
\leq\liminf_{k\rightarrow\infty}b^k
=\lim_{k\rightarrow\infty}\inf_{j\geq k}b^j
\leq\lim_{k\rightarrow\infty}\inf_{n_j\geq k}b^{n_j}
=\lim_{j\rightarrow\infty}b^{n_j}
=0.
$$
Finally, we have $\displaystyle\liminf_{k\rightarrow\infty}b^k=0$.
\end{proof}

\noindent
In the following, we prove Theorem \ref{thm1}.

\begin{proof}
First we show that
the limit $\displaystyle\lim_{k\rightarrow \infty}\|x^k-z\|$ exists.
Let $z \in \mathcal{J}$, and
define $\mathcal{P}^k=S_{\tau\mu}\circ G_{\tau}(x^k)$. Then
\begin{eqnarray*}
\begin{array}{r@{\hspace*{+4pt}}l}
&\left\|x^{k+1} - z\right\|_2\\
=&\left\|x^k +\sigma_k\left(\mathcal{P}^k-x^k\right) - z\right\|_2\\
=&\left\|(1-\sigma_k)x^k+\sigma_k\mathcal{P}^k-z\right\|_2\\
=&\left\|(1-\sigma_k)(x^k-z)+\sigma_k\left(\mathcal{P}^k-z\right)\right\|_2\\
\leq&(1-\sigma_k)\left\|x_k-z\right\|_2
+\sigma_k\left\|\mathcal{P}^k-z\right\|_2\\
\leq &(1-\sigma^k)\left\|x^k-z\right\|_2+\sigma_k\left\|x^k-z\right\|_2\\
& \mbox{(since $z\in\mathcal{J}$ and by Lemma \ref{nex})}\\
=&\left\|x^k-z\right\|_2.
\end{array}
\end{eqnarray*}
Hence, the sequence $\{\|x^k-z\|\}$ is monotone decreasing and $\displaystyle\lim_{k\rightarrow \infty}\|x^k-z\|$ exists.

Second we show that the sequence $\left\{x^k\right\}$ is Cauchy.
By Lemma \ref{para}, we have
\begin{eqnarray}
\label{eq1}
\begin{array}{r@{\hspace*{+4pt}}l}
&\|x^h-x^k\|^2\\
=&\|(x^h-z)-(x^k-z)\|^2\\
=&2\|x^h-z\|^2+2\|x^k-z\|^2-\|x^h+x^k-2z\|^2.
\end{array}
\end{eqnarray}
Let $\displaystyle\lim_{k\rightarrow \infty}\|x^k-z\|=c$.
We have
\begin{eqnarray*}
\begin{array}{r@{\hspace*{+4pt}}l}
&\left|\|x^h+x^k-2z\|-2c\right|\\
=&|\|(x^h-z)+(x^k-z)\|-c-c|\\
\leq &|\|x^h-z\|-c|+|\|x^k-z\|-c|
\end{array}
\end{eqnarray*}
and
\begin{eqnarray*}
\begin{array}{r@{\hspace*{+4pt}}l}
&\lim\limits_{h,k\rightarrow\infty}\left|\|x^h+x^k-2z\|-2c\right|\\
\leq &\lim\limits_{h\rightarrow\infty}|\|x^h-z\|-c|+
\lim\limits_{k\rightarrow\infty}|\|x^k-z\|-c|\\
=&0,
\end{array}
\end{eqnarray*}
which means
\begin{eqnarray}
\label{eq2}
\lim\limits_{h,k\rightarrow\infty}\|x^h+x^k-2z\|=2c.
\end{eqnarray}

\noindent
By Eq. (\ref{eq1}) and Eq. (\ref{eq2}), we obtain
$$\lim_{h,k \rightarrow \infty}\|x^h-x^k\|_2^2
=2c^2+2c^2-(2c)^2=0,$$
and hence the sequence $\{x^k\}$ is Cauchy. Therefore, $\{x^k\}$ is a convergent sequence.

Third we prove that limit $x^*=\lim\limits_{k\rightarrow\infty}x^k
\in\mathcal{J}$.
Since $x^{k+1}=x^k+\sigma_k(\mathcal{P}^k-x^k)$,
we have
\begin{eqnarray}
\label{eq3}
\sigma_k\|\mathcal{P}^k-x^k\|=\|x^{k+1}-x^k\|,
\end{eqnarray}
and then
\begin{eqnarray*}
\begin{array}{r@{\hspace*{+4pt}}l}
\lim\limits_{k\rightarrow \infty}\sigma_k\left\|\mathcal{P}^k-x^k\right\|
&=\lim\limits_{k \rightarrow \infty}\left\|x^{k+1}-x^k\right\|\\
&=\left\|\lim\limits_{k\rightarrow \infty}(x^{k+1}-x^k)\right\|\\
&=\|x-x\|\\
&=0.
\end{array}
\end{eqnarray*}

\noindent
By the fact that sequence $\left\{\sigma_k\right\}$ is decided
by Step \ref{alg1:quasi Armijo} for each iteration in Algorithm \ref{alg1}, each $\sigma_k$
is mutually independent, and, thus, $\left\{\sigma_k\right\}$
does not converge.
By Lemma \ref{liminf}, we have
\begin{eqnarray}
\label{eq4}
\liminf_{k\rightarrow \infty}\|\mathcal{P}^k-x^k\|=0.
\end{eqnarray}

\noindent
Next we aim to show that
$\displaystyle\lim_{k\rightarrow\infty}\|\mathcal{P}^k-x^k\|$ exists.
Since
\begin{eqnarray*}
\begin{array}{r@{\hspace*{+4pt}}l}
&\left\|\mathcal{P}^{k+1}-x^{k+1}\right\|\\
=&\left\|\mathcal{P}^{k+1}-x^k-\sigma_k(\mathcal{P}^k-x^k)\right\|\\
=&\left\|\mathcal{P}^{k+1}-(1-\sigma_k)x^k-\sigma_k\mathcal{P}^k\right\|\\
=&\left\|\mathcal{P}^{k+1}-\mathcal{P}^k
-(1-\sigma_k)x^k+(1-\sigma_k)\mathcal{P}^k\right\|\\
=&\left\|\mathcal{P}^{k+1}-\mathcal{P}^k+(1-\sigma_k)(\mathcal{P}^k-x^k)\right\|\\
\leq&\left\|\mathcal{P}^{k+1}-\mathcal{P}^k\right\|
+(1-\sigma_k)\left\|\mathcal{P}^k-x^k\right\|,
\end{array}
\end{eqnarray*}
by the fact that $S_{\tau\mu}\circ G_{\tau}$ is nonexpansive,
that is
$$\left\|\mathcal{P}^{k+1}-\mathcal{P}^k\right\|
\leq\left\|x^{k+1}-x^k\right\|.$$
According to Eq. (\ref{eq3}), it follows that
\begin{eqnarray*}
\begin{array}{r@{\hspace*{+4pt}}l}
&\left\|\mathcal{P}^{k+1}-\mathcal{P}^k\right\|
+(1-\sigma_k)\left\|\mathcal{P}^k-x^k\right\|\\
\leq&\|x^{k+1}-x^k\|+(1-\sigma_k)\|\mathcal{P}^k-x^k\|\\
=&\sigma_k\|\mathcal{P}^k-x^k\|+(1-\sigma_k)\|\mathcal{P}^k-x^k\|\\
=&\|\mathcal{P}^k-x^k\|.
\end{array}
\end{eqnarray*}
This implies that $\{\|\mathcal{P}^k-x^k\|\}$ is a decreasing sequence and bounded below by $0$.
Therefore, the sequence $\{\|\mathcal{P}^k-x^k\|\}$ is convergent and Eq. (\ref{eq4}) leads to $$\lim_{k\rightarrow\infty}\|\mathcal{P}^k-x^k\|=0.$$

Finally, according to Lemma \ref{nex}, 
both functions $S_{\tau\mu}\circ G_{\tau}$ and
$\left\|\cdot\right\|$ are continuous.
We have
\begin{eqnarray*}
\begin{array}{r@{\hspace*{+4pt}}l}
0&=\lim\limits_{k\rightarrow \infty}\left\|\mathcal{P}^k-x^k\right\|\\
&=\left\|\lim\limits_{k\rightarrow \infty}(\mathcal{P}^k-x^k)\right\|\\
&=\left\|S_{\tau\mu}G_{\tau}\left(\lim\limits_{k\rightarrow \infty}x^k\right) - \lim\limits_{k\rightarrow \infty}x^k\right\|\\
&=\left\|S_{\tau\mu}G_{\tau}(x)-x^*\right\|.
\end{array}
\end{eqnarray*}
Hence
$x^*=\lim\limits_{k\rightarrow\infty}x^k$ is a fixed point of function $S_{\tau\mu}\circ G_{\tau}$, and we complete the proof.
\end{proof}

\section*{Acknowledgment}
This work was supported by Ministry of Science and Technology under grants MOST 102-2221-E-001-022-MY2 and 102-2221-E-001-002-MY2.

\ifCLASSOPTIONcaptionsoff
  \newpage
\fi



%
\bibliographystyle{IEEEtran}
\bibliography{refs}

%
%

%








\end{document}